\newcommand*\samethanks[1][\value{footnote}]{\footnotemark[#1]}
\title{On Demand Memory Specialization for Distributed Graph Databases}
\author{
\alignauthor
Xavier Martinez-Palau\thanks{DAMA-UPC, Computer Architecture
Department, Technical University of Catalonia, Barcelona}\\
       \email{xmartine@ac.upc.edu}
\alignauthor
David Dominguez-Sal\samethanks\\
       \email{ddomings@ac.upc.edu}
\alignauthor Reza Akbarinia\thanks{INRIA - LIRMM, Montpellier, France}\\
       \email{reza.akbarinia@inria.fr}
\and
\alignauthor Patrick Valduriez\samethanks\\
       \email{patrick.valduriez@inria.fr}
\alignauthor Josep Llu\'is Larriba-Pey\samethanks[1]\\
       \email{larri@ac.upc.edu}
}
\newtheorem{theorem}{Theorem}[section]
\newtheorem{lemma}[theorem]{Lemma}
\newtheorem{corollary}[theorem]{Corollary}
\begin{document}
\maketitle

\begin{abstract} 

In this paper, we propose the DN-tree that is a data structure to
build lossy summaries of the frequent data access patterns of the
queries in a distributed graph data management system. 
These compact representations allow us an efficient communication
of the data structure in distributed systems.
We exploit this data structure
with a new \textit{Dynamic
Data Partitioning} strategy (DYDAP) that assigns the portions of the
graph according to historical data access patterns, and guarantees
a small network communication and a computational load balance in
distributed graph queries.
This method is able to adapt dynamically to new workloads and evolve
when the query distribution changes. 
Our experiments show that DYDAP
yields a throughput up to an order of magnitude higher than previous methods
based on cache specialization, in a variety of scenarios, and the average
response time of the system is divided by two.

\end{abstract}

\section{Introduction}

Graph databases have become quite popular over the last years. Social networks,
bibliographic relations or metabolical pathways are some examples of datasets
naturally expressed as graphs. In these scenarios, most computations that are
performed against these datasets can be expressed as graph queries. For
example, the home page of a social network, which shows the new posts of the
users' friends, can be implemented as a two hop traversal that navigates from
the user to his friends and then to the published messages. Also, large
analytical  operations like finding the most influential users of a social
network, can be implemented by computing the central users of the network with
a sequence of breadth first search traversals of the
graph~\cite{brandes2001faster}.

The size of graph datasets is usually very large, and grows daily.
For instance, in March~2011, Twitter users generated about a billion tweets per
week~\cite{twitter}, which corresponds to more than $1500$~tweets per second.
As the size of the datasets increases, they become more and more difficult to
manage on a single machine and graph partitioning becomes necessary. In
addition, some environments require many queries answered per unit of time,
making it also difficult for a single computer to cope with them. Graph
partitioning is a common technique used to maintain data locality in
distributed systems. Some systems partition the graph
statically~\cite{yang2012towards} but this requires computing the min cut of
the graph, which is an NP-hard problem, and thus can be very expensive for some
datasets. As a result, most of the current solutions, such as
Pregel~\cite{malewicz2010pregel} or ParallelGDB~\cite{barguno2011parallelgdb},
ignore this problem and partition the vertices by hashing their identifiers.



In this paper, we propose a dynamic approach, that we call DYDAP, for dynamic
data partitioning, which is able to summarize the graph accesses through a
compact data structure, that we call the DN-tree, for density tree. When a
query is launched, the system analyzes the DN-tree contents and partitions the
graph by taking into account the data access patterns of previous queries. Our
approach does not compute the min cut of the whole graph, but only of a small
graph induced by the DN-tree. Therefore, it ensures graph access locality while
being scalable.

Our first contribution is the design of the DYDAP system for graph computation
that divides the graph into small portions that have good locality properties
based on the analysis of previous queries. Our approach distinguishes two
levels: the secondary storage and the memory manager. The secondary storage is
an independent persistent storage for the graph structure and its attributes.
The memory manager specializes the memory of each node\footnote{Through all this
paper, we use \emph{node} to refer to a computer, and \emph{vertex} to refer to
a graph entity.} in the cluster and takes advantage of this cache
specialization, allowing for faster query execution. The rest of our
contributions are focused on the efficiency of the memory manager.

Our second contribution is the proposal and analysis of the DN-tree, a data
structure that captures the sequence of data accesses. Since the amount of
information can be very large, the DN-tree performs efficient compression, as
analyzed in Section~\ref{sec:theoretical}. The memory manager uses the DN-tree
to analyze the data access patterns of ongoing queries, and is able to compute
 a new partitioning of the data at regular time intervals. Each new
partitioning is based on the execution patterns of the queries that the system
has executed until that point. This allows the system to adapt to varying
workloads and improve the throughput of new queries. Other systems in the
literature using similar partitioning schemes are static, meaning that the
distribution of data is decided before starting the execution of queries and
does not change during
execution~\cite{barguno2011parallelgdb,curino2010schism}. These systems do not
adapt to the nature of the incoming queries, and while they are optimized for
some queries, if the incoming queries change, they are no longer optimal.

Our third and final contribution is to balance the load and network
communication between nodes. The memory manager uses the information stored in
the DN-tree data structure to compute a partitioning of the data that balances
the load in each node while simultaneously balancing and minimizing the amount
of network communication. As the experiments described in
Section~\ref{sec:experiments} show, this translates into an increased
throughput of the queries executed and a decreased average response time, which
is cut in half.

The rest of the paper is organized as follows: Section~\ref{sec:definition}
formally defines the problem. Section~\ref{sec:overview} introduces a general
vision of the DYDAP system. Section~\ref{sec:DN-tree} describes the DN-tree
data structure. Section~\ref{sec:manager} describes the partitioning strategy
of our system. Section~\ref{sec:theoretical} analyzes the asymptotic size of
the DN-tree structure and provides bounds on the error due to its compression.
Section~\ref{sec:experiments} describes the experiments and results obtained.
Section~\ref{sec:related} discusses related work. Section~\ref{sec:conclusions}
concludes the paper.

\section{Problem Definition}
\label{sec:definition}



We assume a shared nothing cluster where data is divided into chunks of
equal size, named \emph{extents}, each of them with a unique identifier. An
extent stores arbitrary data, and it is up to the database management system to
decide how to store data in the extents. Due to the shared nothing architecture
used, each node accesses a subset of the extents, and each pair of subsets is
disjoint. Thus, there is a mapping between extents and nodes, that
assigns a unique node to each extent. We call this mapping \emph{distribution
function}.


The execution of a query in the system is modeled as a Bulk Synchronous
Processing (BSP) computation similar to that of the BFS algorithm
in~\cite{yoo2005scalable}. In this model, there are points in time where
network communication is allowed, and intervals, named \emph{phases}, where
each node does computations on the data assigned to it according to the
distribution function. The duration of each phase is determined by the slowest
node, because all nodes need to finish their computations before network
communication begins.

Our model assumes that the graph database can be updated, adding new vertices
or edges. It also assumes that there are trends in the content searched by the
queries and this pattern does not radically change over time. For example, a
web server generating dynamic content issues similar queries although each
query accesses a different set of the data. The model also assumes that there
is no a priori knowledge about the queries that the system executes.


Our objective is to find an optimal distribution function, i.e. one that meets
the following two objectives:

\vspace{-0.15cm}
\begin{enumerate}
\itemsep=-.3em
 \item Minimize network communication.
 \item Balance the load among the nodes during all computation phases.
\end{enumerate}
\vspace{-0.15cm}

The second objective focuses on load balancing. This balancing is twofold. On
the one hand, the total workload for each node should ideally be the same; on
the other hand, all nodes should always be performing useful computations.
These two sides of load balancing are illustrated in the following example.

\subsubsection*{Example}

Consider the following simplified example. The system consists of two nodes,
labeled $n_1$ and $n_2$. The database consists of four extents, 
$\{e_1, e_2, e_3, e_4\}$. We also assume that accessing the information on an
extent and doing the required computations consumes one unit of time, and that 
exchanging through the network the information generated when accessing each
extent costs one unit. There are two computational phases and one communication
step in between.

A query $q$ accesses all extents, but $e_1$ and $e_2$ need to be accessed
strictly before $e_3$ and $e_4$. A possible distribution function is one that
maps $e_1$ and $e_2$ to $n_1$, and $e_3$ and $e_4$ to $n_2$. In the first
phase, $n_1$ accesses $e_1$ and $e_2$, using 2 units of time while $n_2$ is
idle. During the network communication phase, $n_1$ communicates the necessary
information to $n_2$, using $2$ units of network cost. During the second phase,
$n_2$ accesses $e_3$ and $e_4$ while $n_1$ is idle. The time consumed in the
second phase is $2$ units, and the total cost is $4$ units of time plus $2$
units of network cost.

A second possible distribution function assigns $e_1$ and $e_3$ to $n_1$, and
$e_2$ and $e_4$ to $n_2$. In the first phase, $n_1$ accesses $e_1$ and $n_2$
accesses $e_2$ in parallel, and the time consumed is one unit. Both nodes
communicate through the network, using $2$ units of network cost, and similarly
the second phase needs one unit of time, with $n_1$ accessing $e_3$ and $n_2$
accessing $e_4$. The total time is $2$ units of time plus $2$ units of network
cost.

We note that both distribution function minimize network communication.
However, only the second one balances the load. The first distribution function
only balances the load globally, as both nodes have the same global amount of
workload. The important difference is that the second distribution function
parallelizes the computations.

\subsection{Problem Formalization}

In this section, we formalize the problem that we solve to obtain a distribution
function, the \emph{Horizontal Multiconstraint Partitioning Problem}
(HMCPP)~\cite{karypis1998multilevel}. Let $G=(V,E)$ be a graph with $|V|=r$. A
partitioning of $G$ in $k$ parts is a map $P: V \longrightarrow [0,k-1]$. Each
vertex $v \in V$ belongs to partition $P(v)$. Given an edge $e \in E$ that
joins two vertices $u,v \in V$, we say that $P$ cuts $e$ if $P(u) \neq P(v)$.

Each edge $e \in E$ has an associated scalar weight $w_e$, and each vertex $v
\in V$ has an associated weight vector $w^v$ of size $m$. Each component of
this vector is a balancing constraint, so there are $m$ constraints. Also, we
assume, without loss of generality, that each constraint adds to $1$ when added
along the whole graph, $\sum_{v \in V} {w_i^v} = 1$, for $0 \le i < m$.  The
values of these constraints have to be balanced with respect to the partitions.
Given $P$, a partitioning of $G$, we define the imbalance of constraint $i$ as
$$l_i = k \max_{0 \le j < k} \left( \sum_{v \in P^{-1}(j)} w_i^v \right)$$
for $0 \le i < m$. The imbalance is $k$ times the maximum value of the sum of
the weights in each partition.

An optimal solution occurs when all constraints are balanced, i.e. $\sum {w_i}
= \frac{1}{k}$ and thus $l_i$ is minimized, $l_i = 1$. If a solution is not
optimal, then $l_i > 1$.

We define $c$ as a vector of size $m$, with $c_i \ge 1$. Each $c_i$ represents
the maximum imbalance allowed for constraint $i$. Also, we define the
\emph{edge cut} as the sum of the weights of the edges cut by $P$, i.e. $\sum
w_e$ for each vertex $e=(u,v)$ such that $P(u) \neq P(v)$.


The HMCPP is defined as: finding a partitioning $P$ of $G$ in $k$ parts that
minimizes the edge cut, while $l_i \le c_i$ for $0 \le i < m$. The solution
that satisfies the HCMPP condition fulfills the previously presented objectives
of network communication and load balance.

This problem is a well known NP-hard problem arising in several situations, for
example VLSI circuit design~\cite{andreev2006balanced} or detection of cliques
in social, pathological and biological networks~\cite{patkar2003efficient}. It
has been extensively studied, and is solved by approximate algorithms very
fast. One software that partitions graphs is
METIS~\cite{karypis1998multilevel}, which is the one we use.

\section{DYDAP Overview}
\label{sec:overview}

\begin{figure}[t]
\centering
\epsfig{file=./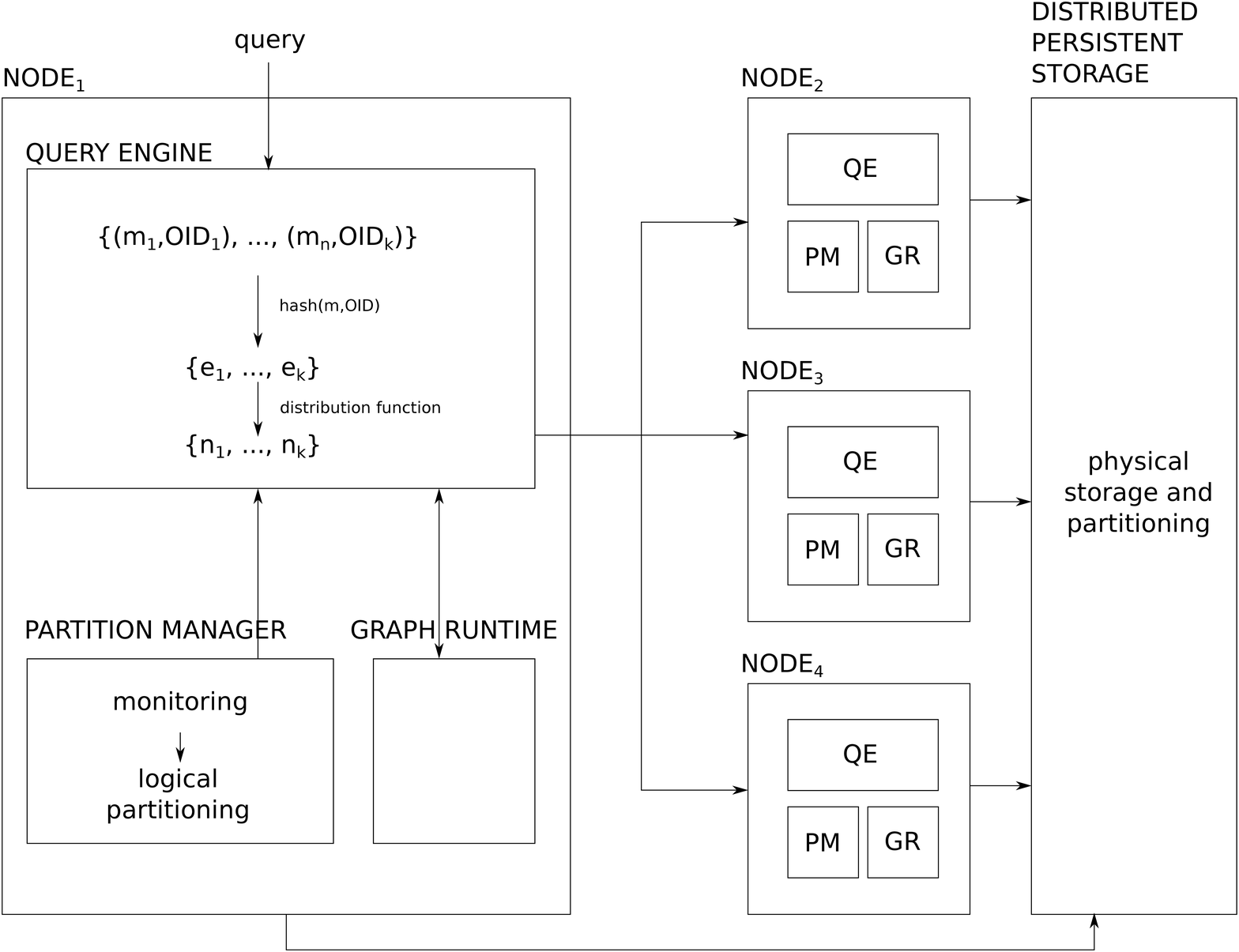, width=0.8\linewidth}
\caption{ Overview of the distributed system. }
\label{fig:esquema}
\end{figure}

Figure~\ref{fig:esquema} shows the architecture of DYDAP, which logically
separates the secondary storage (physical partitioning) from the 
memory manager in each node of the system.

The \textbf{physical storage} is used to store the data used by the system. Any
physical storage able to store extents of fixed size indexed by a unique
identifier can be used by the memory manager, which gives a useful decoupling
between persistent storage and computation. 

The memory manager distributes the data stored into the main memory
of each node in an optimal way, according to the data analyzed during query
execution. Each of the nodes that form the memory manager, four in
Figure~\ref{fig:esquema}, contains three subsystems: the graph runtime, the
query engine, and the partition manager.

The \textbf{graph runtime} is a graph database that is able to store locally a
graph and execute graph operations. This database stores data in extents, and
has to be modified to use the distributed storage as its permanent storage
instead of the local hard disk. Additionally, it has to record accesses to
extents and report them to the partition manager.

The \textbf{partition manager} monitors the accesses to the extents and stores
the information in a DN-tree. At periodic intervals of time, the partition
manager starts a repartition task in background and updates the distribution
function. The first part of this repartition task is to gather and aggregate
the DN-trees built in each node and merge them in a DN-tree of the full system,
described in Section~\ref{sec:DN-tree}. The second part is the generation of
the new distribution function as described in Section~\ref{sec:manager}. During
the repartition process, the queries can be executed using the last
distribution function until a new one is computed. So, this process does not
delay or interfere significantly with incoming queries.


Every time a new query enters the system, the \textbf{query engine} analyzes
the data that the query accesses, and distributes it across the system, using
the distribution function provided by the partition manager. 

\section{The DN-tree Data Structure}
\label{sec:DN-tree}

The partition manager captures the sequence of data accesses in the graph
database and uses this information to improve its partitioning. The sequence of
data accesses is recorded as a matrix of transitions, $M$, which in turn can be
also viewed as a graph, where cell $M_{a,b}$ counts the number of times that
extent $b$ has been accessed after $a$. If there are $m$ extents in the
database, storing this information requires an $m \times m$ matrix, called $M$.
This matrix can be used later to detect data access patterns that happen often
in the system. For example, if the value of $M_{a,b}$ is large, then we will be
able to improve the data access locality of the system by assigning extents $a$
and $b$ to the same node, because they are accessed often together. However,
since the matrix $M$ can be very large, we propose an alternative data
structure that approximates $M$, the density tree, or \emph{DN-tree}. The
DN-tree can be seen as a temperature map of the sequences of accesses to
extents, which may be hot (often accessed) or cold (seldom accessed).

\subsection{DN-tree Formalization}

The DN-tree data structure consists of a rooted tree where each non-leaf
vertex has exactly four children. Each vertex of the tree is associated with a
subset of the matrix. The root vertex is associated with the whole matrix, and
each of the four children of a vertex is associated to a spatial partitioning
of the space in four equal quadrants. If a vertex of the tree monitors rows
between $a$ and $b$, and columns between $c$ and $d$, we write this range as
$[a,b] \times [c,d]$. For such a vertex, each of its four children monitors
accesses to extents in the ranges

\begin{small}
\begin{itemize}
 \item $\left[a,\sfrac{(a+b)}{2}\right] \times \left[c,\sfrac{(c+d)}{2}\right]$
 \item $\left[a,\sfrac{(a+b)}{2}\right] \times \left(\sfrac{(c+d)}{2},d\right]$
 \item $\left(\sfrac{(a+b)}{2},b\right] \times \left[c,\sfrac{(c+d)}{2}\right]$
 \item $\left(\sfrac{(a+b)}{2},b\right] \times \left(\sfrac{(c+d)}{2},d\right]$
\end{itemize}
\end{small}

The root vertex is special and only contains pointers to its four children. The
range of the matrix associated with the root vertex is $[0,m] \times [0,m]$ and
thus the four subareas associated to its children are:

\begin{small}
\begin{itemize}
 \item $[0,m/2] \times [0,m/2]$
 \item $[0,m/2] \times (m/2,m]$
 \item $(m/2,m] \times [0,m/2]$
 \item $(m/2,m] \times (m/2,m]$
\end{itemize}
\end{small}

Each vertex different from the root vertex stores an integer that accounts
for the number of accesses that the region associated has had.

{\flushleft \textbf{Insertion}:} First, we describe constructively the data
structure. The insertion operations are inspired by
quadtrees~\cite{finkel1974quad}, as described in Algorithm~\ref{alg:update}.
Initially, the DN-tree only contains the root vertex. 

If extent $b$ is accessed after extent $a$, the value of the root child
associated with the area containing $(a,b)$ is incremented in one unit. This
happens every time accesses to different extents are done. This counter may
reach a threshold value $\tau(i)=t \cdot k^i$, where $t$ and $k$ are constants
(see Section~\ref{sec:theoretical} for a discussion on these values.) The
parameter of $\tau$ is accounted by \emph{vertex.level} in
Algorithm~\ref{alg:update} (the distance between the vertex and the root of the
tree). For the root children, $\tau(1)=t k$. Once the counter reaches the
threshold, the counter is not modified anymore and the DN-tree grows to record
this area with more detail. Four new vertices are created, which are rooted on
the saturated counter. Thereafter, the new vertices monitor the range of values
of the saturated value, with the new counters initialized to zero. This is the
source for the lossy nature of these data structure, as there is no way to know
the distribution to the lower levels of the saturated counters. The method
$selectChild$ returns a pointer to the child associated with the subarea where
$(a,b)$ belongs to.

\algsetup{indent=2em,linenosize=\small}
\begin{algorithm}[!htb]
 \caption{update(vertex, a, b)}
 \fontsize{8}{8}\selectfont
 \begin{algorithmic}[1]
  \REQUIRE \emph{vertex}: current vertex of the tree.
  \REQUIRE \emph{a}: previously accessed extent.
  \REQUIRE \emph{b}: last accessed extent.

  \IF{vertex.value == $\tau(\mathrm{vertex.level})$}
    \STATE let \emph{child} = vertex.selectChild(a, b)
    \STATE update(child, a, b)
  \ELSE
    \STATE vertex.value = vertex.value + 1;
  \ENDIF
  \RETURN
 \end{algorithmic}
 \label{alg:update}
\end{algorithm}

{\flushleft \textbf{Query}:} The DN-tree is used to compress a matrix $M$.
Since the compression used is lossy, it is not possible to recover $M$, but an
approximation $\hat{M}$. The DN-tree allows us to compute $\hat{M}(i,j)$, which
is the value in row~$i$ and column~$j$ of $\hat{M}$, and it is an approximation
of $M(i,j)$. The procedure to compute $\hat{M}(i,j)$ is a recursive traversal
of the DN-tree from the root node to the leafs as shown in
Algorithm~\ref{alg:query}, where \emph{value} is initially $0$. For each
vertex, we accumulate the counter corresponding to the subarea of $(i,j)$, and,
in case it is not a leaf node, the counter is weighted by the number of
accesses in that vertex. This procedure guarantees a reduced error as discussed
in Section~\ref{sec:theoretical}. In this algorithm, \emph{vertex.childSum}
stores the sum of the values of the four children of \emph{vertex}.

\algsetup{indent=2em,linenosize=\small}
\begin{algorithm}[!htb]
 \caption{density(vertex, a, b, value)}
 \fontsize{8}{8}\selectfont
 \begin{algorithmic}[1]
  \REQUIRE \emph{vertex}: current vertex of the tree.
  \REQUIRE \emph{a}: row extent.
  \REQUIRE \emph{b}: column extent.
  \REQUIRE \emph{value}: reference where the return value is stored.

  \STATE value = value + vertex.value;
  \IF{vertex is not a leaf node}
    \STATE let \emph{w} = vertex.selectChild(a, b)
    \STATE value = value * w.value / vertex.childSum
    \STATE density(w, a, b, value);
  \ENDIF
  \RETURN
 \end{algorithmic}
 \label{alg:query}
\end{algorithm}

The DN-tree takes advantage of the query access patterns and the spatial
locality of data structures in a graph database. In real databases, most
transitions between extents seldom occur or will not occur at all. Thus, many
sections of the matrix $M$ have a low density of accesses and can be
effectively compressed by the DN-tree. The DN-tree works under the assumption
that extents with close identifiers have similar access patterns. The DN-tree
groups consecutive extents, forming bigger ranges of data.


\begin{figure}[t]
\centering
\subfigure[Density of $M$]{
  \epsfig{file=./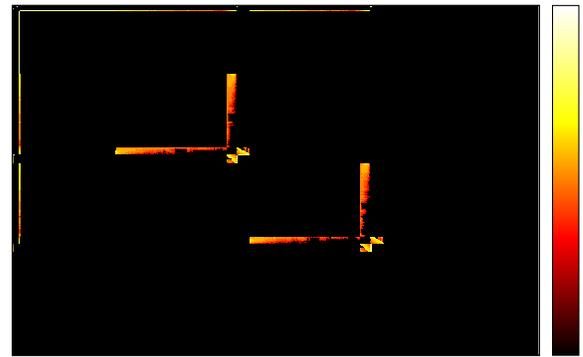, width=0.90\linewidth}
  \label{fig:compressed_a}
}
\subfigure[Density of $\hat{M}$]{
  \epsfig{file=./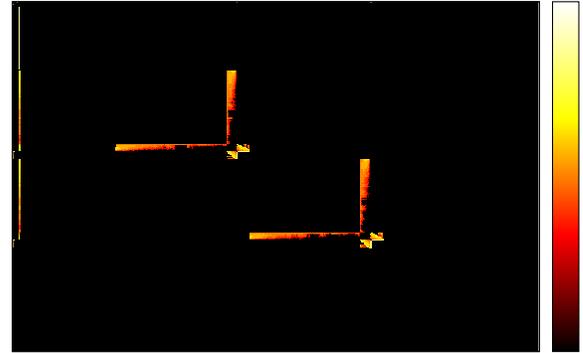, width=0.90\linewidth}
  \label{fig:compressed_b}
}
\caption{ Density of $M$ and $\hat{M}$. }
\label{fig:compressed}
\end{figure}

The areas with more extent accesses are more important, and thus the DN-tree
data structure gives more detail than in areas with few or no accesses. As an
example, Figure~\ref{fig:compressed} shows the associated matrix obtained with
a DN-tree along with the original uncompressed matrix that it approximates.
The data is obtained by running $110$~breadth first search queries on a graph
starting at a random vertex. The graph has about $45$~thousand vertices and
$600$~thousand edges, and the resulting database has $512$~extents The matrix
uses one MiB of memory while the DN-tree needs $138$~kiB, a compression ratio
of $13\%$. During the execution of the queries, there are about $340$~million
extent accesses. The figure shows that the original matrix and its
approximation are very similar, and also that the cold regions dominate over
the hot regions.

\subsection{Distributed aggregation of DN-trees}

In the DYDAP system, each node monitors its own activity, so the DN-tree
only records the local accesses in each node. In order to aggregate the
DN-trees, we use a tree-based scheme of communication to avoid
bottlenecks. Each node receives the DN-tree from its two children, processes
it, and passes it to its parent. The root node computes the next distribution
function and passes it back to its children.

In order to send the DN-tree using the network, the DN-tree is serialized as
the preorder list of the tree described. Along with the value of each vertex,
there is a boolean, named marker, that stores \texttt{false} if the
corresponding vertex has no children, and \texttt{true} otherwise.

Each node combines its own DN-tree with the ones it receives. The root node
has the combination of all partial DN-trees. Since the communication scheme
has a tree-form, in a cluster with $n$ nodes there are $n - 1$ data exchanges.

Algorithm~\ref{alg:join} shows how two serialized DN-trees are combined. This
algorithm does one sequential read of each serialized DN-tree. The algorithm
recursively walks over the arrays containing the serialized DN-trees, treating
three cases: the first two cases correspond to when one of the DN-trees has a
vertex in a branch that the other one does not. In this case, the value is
directly copied from the DN-tree with the branch, and the next recursive call
is executed if necessary. The third case corresponds to both DN-trees having
the vertex in the branch. In this case, the values of the vertices are added
and the next recursive call is executed as needed. We notice that during this
process, some vertices in the resulting DN-tree may have a counter value higher
than the threshold. Since the resulting DN-tree is only used by the partition
manager to extract information and is not updated, this is not a source of
problems.

\algsetup{indent=2em,linenosize=\small}
\begin{algorithm}[!htb]
 \caption{join(src1, index1, src2, index2, ret)}\label{alg:join}
 \fontsize{8}{8}\selectfont
 \begin{algorithmic}[1]
  \REQUIRE \emph{src1}: array storing the first DN-tree.
  \REQUIRE \emph{index1}: integer that indexes \texttt{src1}. If its value
is $-1$, \texttt{src1} is assumed to be empty.
  \REQUIRE \emph{src2}: array storing the second DN-tree.
  \REQUIRE \emph{index2}: integer that indexes \texttt{src2}. If its value
is $-1$, \texttt{src2} is assumed to be empty.
  \REQUIRE \emph{ret}: array, initially empty, to store the result.
  \ENSURE The returned DN-tree \texttt{ret} is a compressed representation of
the sum of the two matrices represented by the two DN-tree parameters.
  \medskip

  \STATE let \emph{mark1, mark2} be boolean values.
  \FOR{i from 1 to 4}
    \IF{index1 == -1}
      \STATE ret.push\_back(src2[index2++]);
      \STATE mark2 = src2[index2++];
      \STATE ret.push\_back(mark2);
      \IF{k != 0}
        \STATE join(src1, index1, src2, index2, ret);
      \ENDIF
    \ELSIF{index2 == -1}
      \STATE ret.push\_back(src1[index1++]);
      \STATE mark1 = src1[index1++];
      \STATE ret.push\_back(mark1);
      \IF{k != 0}
        \STATE join(src1, index1, src2, index2, ret);
      \ENDIF
    \ELSE
      \STATE ret.push\_back(src1[index1] + src2[index2]);
      \STATE index1++; index2++;
      \STATE mark1 = (src1[index1++] != 0);
      \STATE mark2 = (src2[index2++] != 0);
      \STATE ret.push\_back(mark1 or mark2);
      \IF{mark1 and mark2}
        \STATE join(src1, index1, src2, index2, ret);
      \ELSIF{mark1}
        \STATE join(src1, index1, src2, -1, ret);
      \ELSIF{mark2}
        \STATE join(src1, -1, src2, index2, ret);
      \ENDIF
    \ENDIF
  \ENDFOR
  \RETURN
 \end{algorithmic}
\end{algorithm}

\section{Partition Manager}
\label{sec:manager}

The partition manager provides a dynamic distribution function that changes
over time depending on the queries executed and the load of each node. In this
section, we describe the partitioning method used by the partition manager.
Specifically, we describe how the Horizontal Multi-constraint Partitioning
Problem maps to our system and how objectives $1$ and $2$, namely minimization
of network communication and load balancing, described in
Section~\ref{sec:definition}, are achieved.

\subsection{Partitioning Approach}
\label{sub:detail}

In the graph $G$, each vertex is associated with an extent of the database. The
edges are described by the adjacency matrix $\hat{M}$ obtained from the DN-tree
as described in Section~\ref{sec:DN-tree}. Additionally, the value of $k$ is
the number of nodes in our cluster. Thus, each of the partitions correspond to
a node in the cluster.

Objective~$1$ requires that the network communication is minimized. Given an
edge that joins vertices $i$ and $j$, its associated weight, $M_{ij}$,
is the number of times that extent $j$ has been accessed after $i$, it
corresponds to the number of network messages exchanged when the nodes
responsible for $i$ and $j$ are not the same. HMCPP minimizes the edge cut,
which translates into the minimization of the amount of network communication.
This is because edges between vertices belonging to the same partition
involve no network communication, whereas edges between vertices belonging to
different partitions do.

In order to comply with objective~$2$, we define several weight constraints.
The first constraint, $w_0$, is defined as $w_0^v = 1$ for all $v \in V$. With
this constraint, each subset is of a similar size and the total load of each
node is balanced. As described in Section~\ref{sec:definition}, the workload
has to be parallelized. In order to accomplish this, we define more
constraints, one for each data structure in the database whose extents can be
accessed concurrently. Assuming there are $n_\mathrm{DS}$ different data
structures, and naming $\mathrm{DS}_i$ the set of extents belonging to the data
structure number~$i$, we define $n_\mathrm{DS}$ additional constraints, for $1
\le i \le n_\mathrm{DS}$:
$$
w_i^v=
 \left\{
  \begin{array}{ll}
   1 & \mbox{if } v \in \mathrm{DS}_i \\
   0 & \mbox{if } v \notin \mathrm{DS}_i
  \end{array}
 \right.
$$

The solution to HMCPP gives a logical partitioning that assigns a partition
to each vertex. In DYDAP, this corresponds to assigning one node to each
extent. This map from extents to nodes is the distribution function.

\subsection{Example}
\label{sub:matexample}

In this section, we illustrate how the partition manager works. We consider a
system with $2$~nodes and $4$~extents, numbered $0$ to $3$, and show how the
DN-tree is constructed and used to partition the data.

Consider that during execution, the four extents of the system are accessed
according to the following sequence:

\begin{small}
\{1, 2, 1, 3, \textbf{0, 1}, 3, 1, \textbf{0, 1}, 0, 2,
$\mathrm{1}^{\diamond}$, 3, 1, 3, 0, 2, 1, 0, 2, 1, 3, 0, 3, \textbf{0, 1},
\textbf{0, 1}, 3, 1, 3, 1, 2, \textbf{0, 1}, 3, 1, 3, 1, 2, 1, 2, 1\}
\end{small}

This sequence is stored in a matrix $M$ as follows. Notice that we have marked
the $5$ transitions from $0$ to $1$ in the input and the corresponding matrix
entry $M_{0,1}$.

\begin{small}
$$M = \left(\begin{matrix}
0 & \textbf{5} & 3 & 1 \\
4 &         0  & 4 & 9 \\
1 &         6  & 0 & 0 \\
4 &         6  & 0 & 0 \\
\end{matrix}\right)$$
\end{small}

We consider the sequence of accesses and see how the DN-tree is generated.
Level~$0$ of the DN-tree is the root node. It points to the four vertices in
level~$1$, represented in a $2 \times 2$ matrix, with all four values set to
$0$. The four quadrant correspond to different sets of extents, as shown in
Figure~\ref{fig:matrix_a}. We set the parameters $t=4$, $k=1$ and start updating
the DN-tree with each extent access. The first four accesses are $1$, $2$, $1$
and $3$, so the corresponding values are increased. The DN-tree is now as in
Figure~\ref{fig:matrix_a}.

\begin{figure}[t]
\centering
\subfigure[]{
  \epsfig{file=./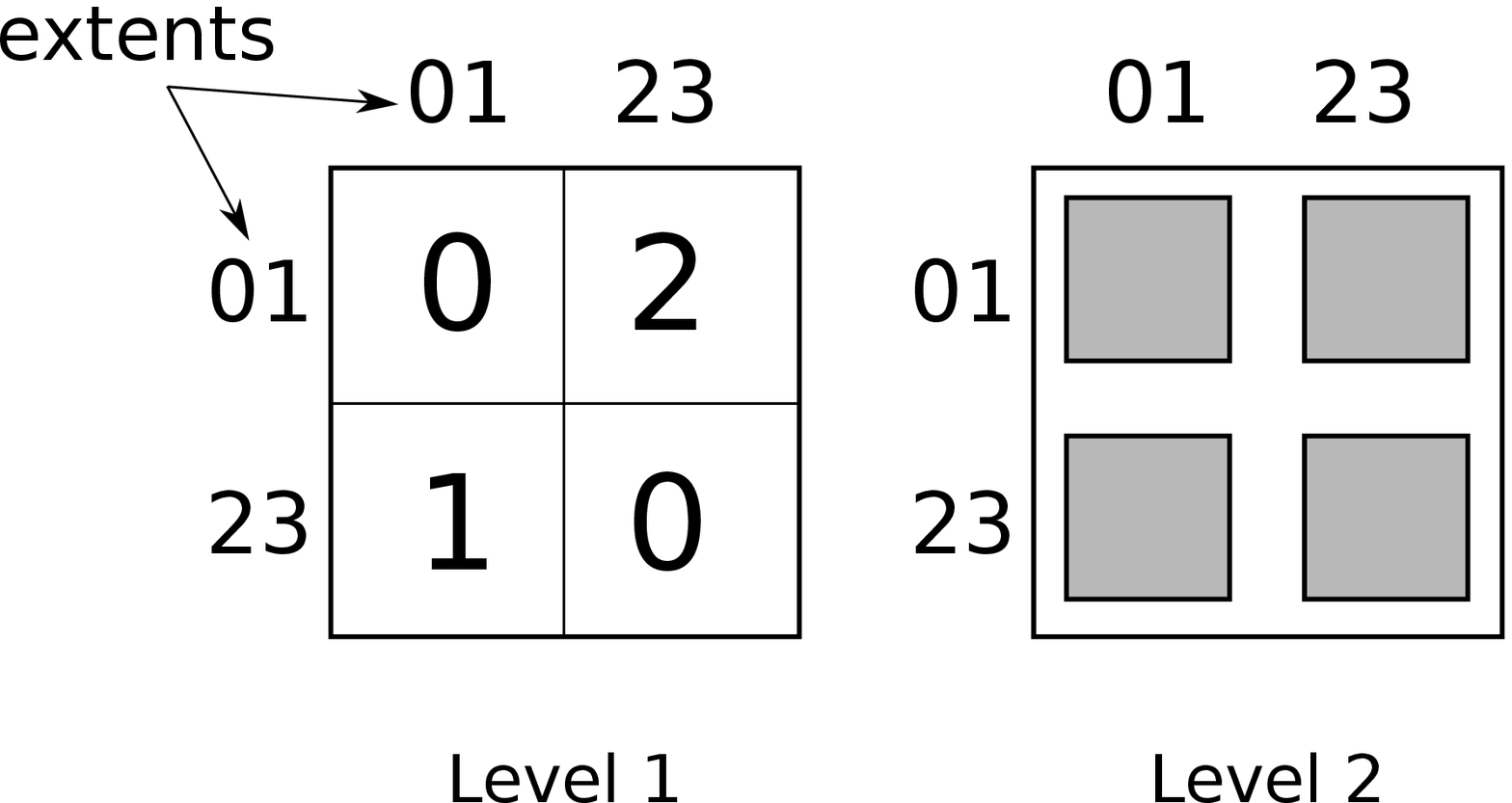, width=0.51\linewidth}
  \label{fig:matrix_a}
}
\subfigure[]{
  \epsfig{file=./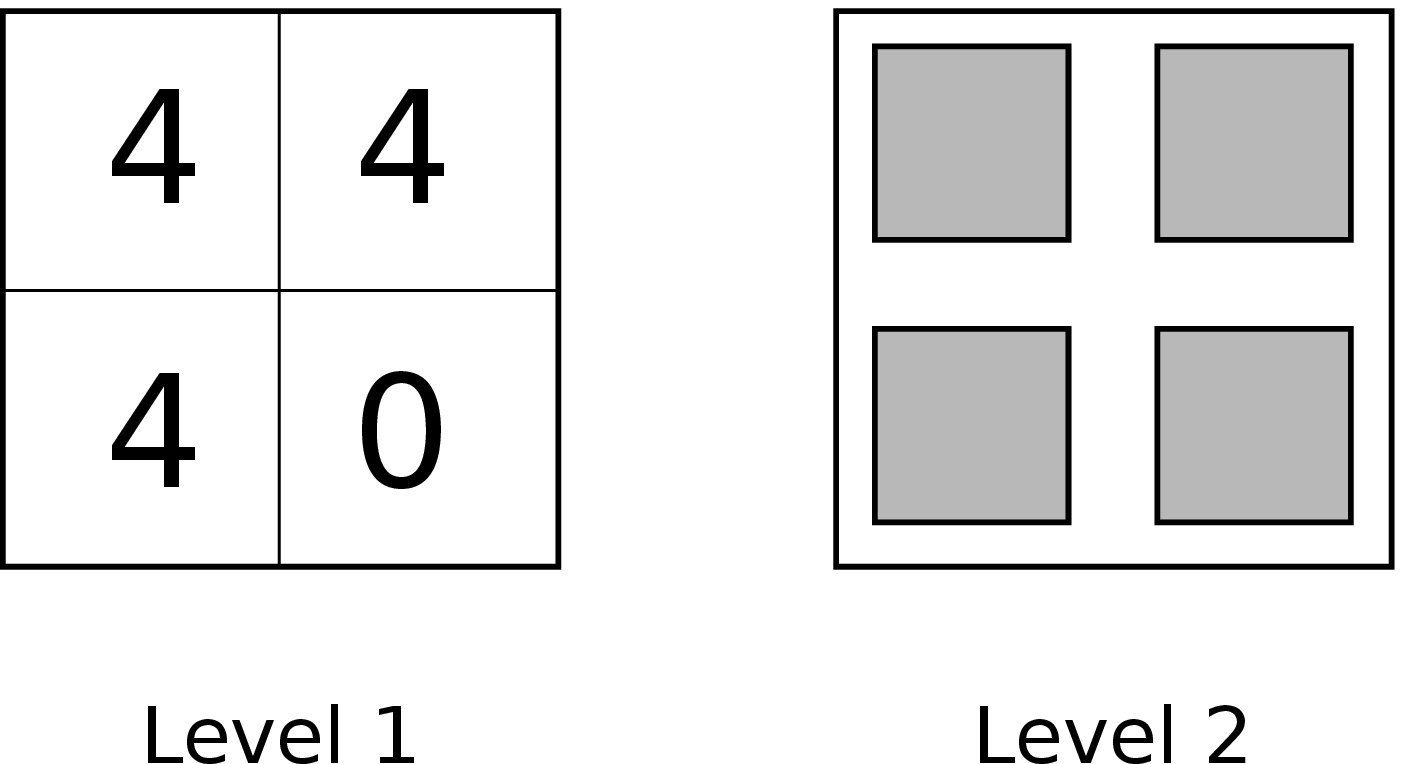, width=0.40\linewidth}
  \label{fig:matrix_b}
}
\subfigure[]{
  \epsfig{file=./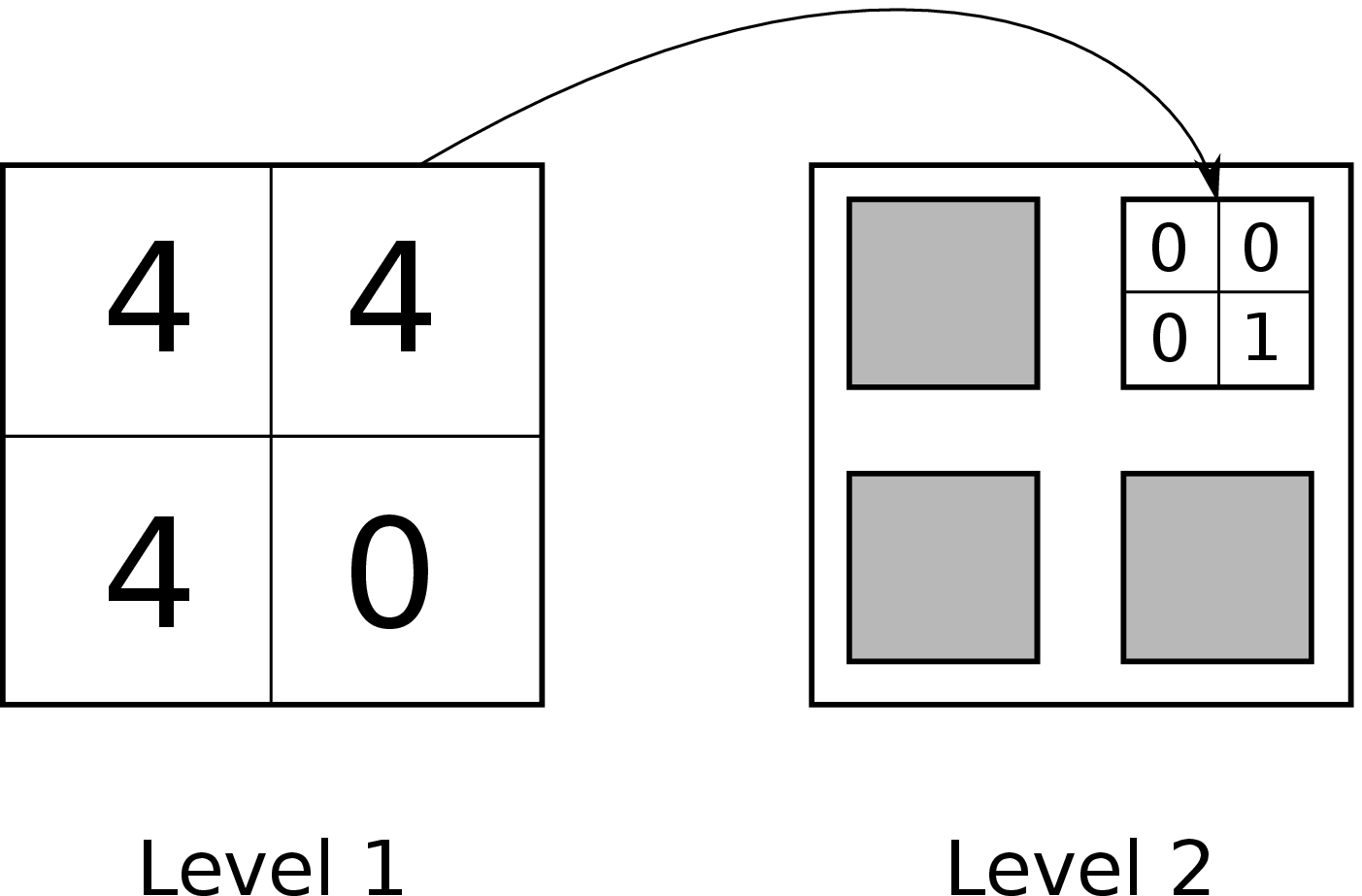, width=0.40\linewidth}
  \label{fig:matrix_c}
}
\subfigure[]{
  \epsfig{file=./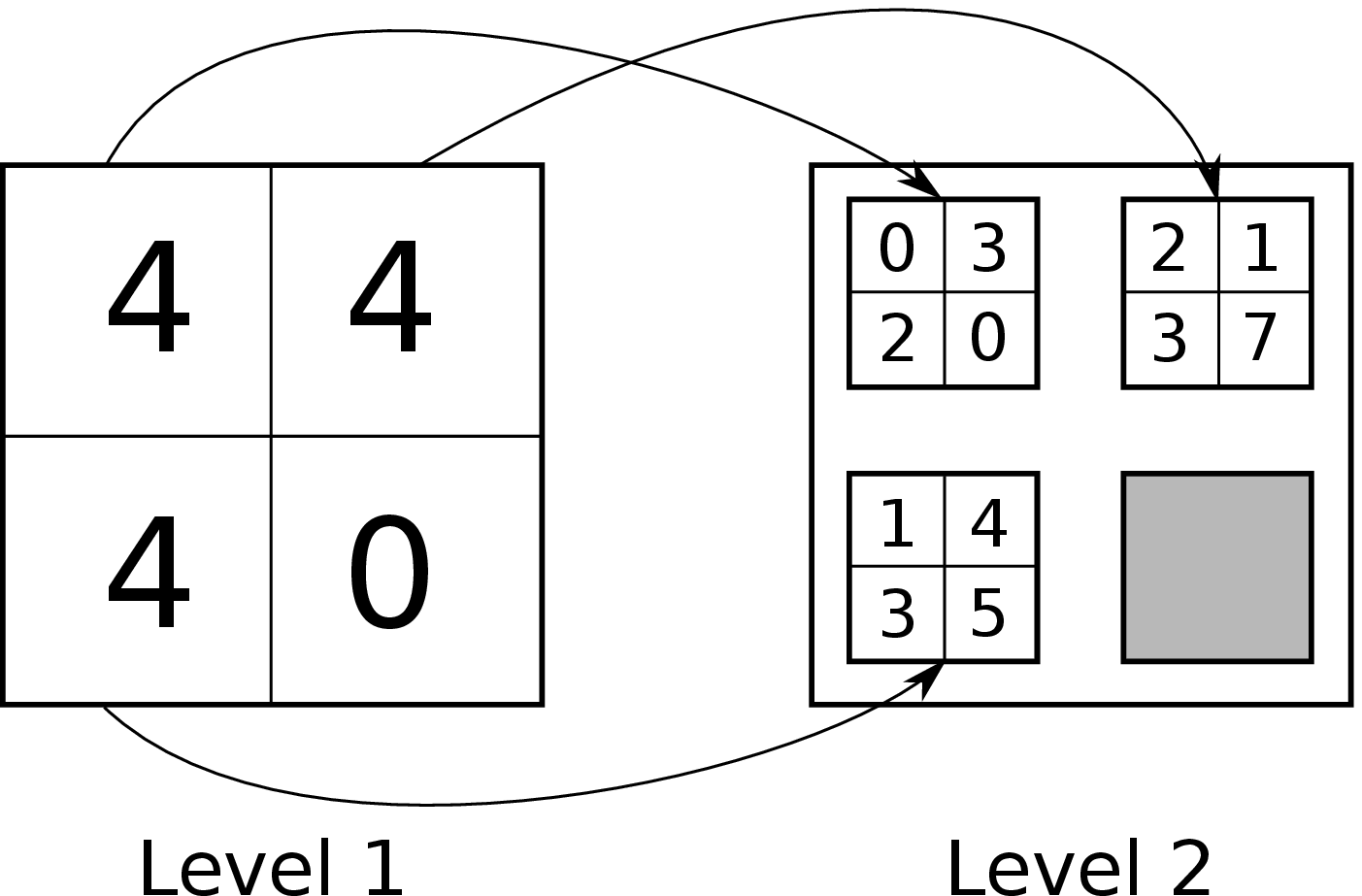, width=0.40\linewidth}
  \label{fig:matrix_d}
}
\caption{ Evolution of the DN-tree. }
\label{fig:matrix}
\end{figure}

After the first $13$ accesses, marked in the list with the diamond $\diamond$,
the DN-tree is as shown in Figure~\ref{fig:matrix_b}. The next transition is
from extent $1$ to $3$, which corresponds to position $(0,1)$, which has a
value of $4$, the threshold value. Now, four new vertices are recursively
generated and initialized. They will be modified during the following accesses.
This is shown in Figure~\ref{fig:matrix_c}. After all $44$ accesses, three of
the four vertices at level~$1$ point to four other vertices each. This is
represented in Figure~\ref{fig:matrix_d}. 

The values $4$ and larger at level~$2$ do not span new vertices because the
extent range associated with these positions has size one and cannot be
partitioned any further. In this example, the DN-tree is very small but not all
branches have full depth. In real scenarios, branches have different depths.

An approximation of the real matrix $M$ is calculated with
Algorithm~\ref{alg:query}. The resulting approximation is

\begin{small}
$$\hat{M} = \left(\begin{matrix}
0 & 5 & 3 & 1 \\
4 & 0 & 4 & 9 \\
1 & 5 & 0 & 0 \\
4 & 7 & 0 & 0 \\
\end{matrix}\right)$$
\end{small}

Comparing the original matrix $M$ with the approximation $\hat{M}$, we see that
only two values are different, and by one unit. In this small example, both $M$
and the DN-tree consist of $16$ integer values, and thus there is no
compression. However, with large matrices the compression ratio is very large,
as analyzed in Section~\ref{sec:theoretical}.

Now, we calculate the optimal partitioning using both $M$ and its approximation
$\hat{M}$. We consider the undirected graphs given by $M + M^T$ and $\hat{M} +
\hat{M}^T$ as their adjacency matrix. As an example, the graph given by $M$ is
represented in Figure~\ref{fig:undirected}. Its adjacency matrix is:

\begin{small}
$$M + M^T = \left(\begin{matrix}
 0 &  9 &  4 &  5 \\
 9 &  0 & 10 & 15 \\
 4 & 10 &  0 &  0 \\
 5 & 15 &  0 &  0 \\
\end{matrix}\right)$$
\end{small}

\begin{figure}[t]
\centering
\epsfig{file=./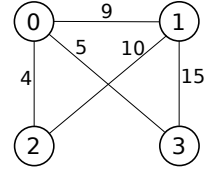, width=0.3\linewidth}
\caption{ Graph of network communication. }
\label{fig:undirected}
\end{figure}

We partition the four vertices of the graph into two subsets of equal size.
This leads to the different possible partitionings. Table~\ref{tab:partitions}
shows these partitionings along with their edge cut. Note that the edge cuts
calculated with $\hat{M}$ are close to the real edge cuts obtained using $M$.
The partitioning that minimizes the edge cut is the same in both cases. This
partitioning minimizes the network communication between the two nodes. Note
that the partitioning foundwith $\hat{M}$ is the same as the optimal
partitioning that could be derived from $M$.


\begin{table}[t]
\centering
\begin{small}
  \begin{tabular}{ | c | c | c | c | }
    \hline
      Partition 1 & Partition 2 & edge cut ($M$) & edge cut ($\hat{M}$) \\
    \hline
      \{0, 1\} & \{2, 3\} & 34 & 34 \\
    \hline
      \{0, 2\} & \{1, 3\} & 24 & 23 \\
    \hline
      \{0, 3\} & \{1, 2\} & 28 & 29 \\
    \hline
  \end{tabular}
\end{small}
  \caption{ Partitionings and edge cuts with $M$ and $\hat{M}$. }
  \label{tab:partitions}
\end{table}

\section{DN-tree Analysis}
\label{sec:theoretical}

In this section, we present a theoretical analysis of the properties of the
DN-tree described in Section~\ref{sec:DN-tree}. Specifically, we analyze the
asymptotic behavior of the memory needed to store the data structure, as well
as an analysis of the error associated with the lossy nature of the
compression.

\subsection{Size Analysis}
\label{sub:size}

In this section, we give the asymptotic behavior of the size of the DN-tree as
a function of the number of extent accesses, $N$. Assuming that the database
has $m$ extents, the uncompressed matrix would need $O(m^2)$ to be stored. We
note that the asymptotic analysis assumes that both $m$ and $N$ tend to
infinity. The number of accesses increases with time, but the number of extents
also increases as new vertices and edges are added to the working dataset.


As described in Section~\ref{sec:DN-tree}, each level has an associated
threshold. The first level, $0$, has a threshold value of $t$, and at each new
level, the threshold is multiplied by a constant factor $k > 1$. Thus, at level
$r$, the threshold value is $tk^r$. We assume that each vertex of the DN-tree
uses $1$ unit of memory to store its associated value.

One last assumption is that at each level, the accesses are distributed along
the four children vertices following the same probability distribution: If we
number the child nodes from $0$ to $3$, then the accesses are distributed
following probabilities $p_i$, $0 \le i < 4$, with $\sum_{i=0}^{3}p_i=1$. This
means that the contents follow an R-MAT model~\cite{ChakrabartiZF04}.

We define a function $T(n,r)$ with two parameters: $n$, the number of extent
accesses, and $r$, a level of the DN-tree. $T(n,r)$ gives the amount of memory
necessary to store $n$ extent accesses starting at level $r$. Thus, the amount
of memory used by a DN-tree storing $N$ extent accesses is $T(N,0)$.

The two parameters of $T$ allow to define a recurrence over $T$. Given $n$ and
$r$, the base case is that $n$ is less than the threshold at level $r$, $n \le
tk^r$. If this is the case, the vertex at this level has no children and the
amount of memory needed is $1$. If $n > tk^r$, then $tk^r$ is stored at level
$r$, and the rest of the accesses, $n - tk^r$ are distributed to the four
children at level $r+1$ following the probabilities $p_i$. Thus, the recurrence
is defined as:

$$
T(n,r)=
 \left\{
  \begin{array}{ll}
   1 & \mbox{if } n \le t k^r \\
   1 + \sum_{i=0}^{3} T\left( p_i \left(n - tk^r \right) , r + 1 \right) &
   \mbox{if } n > t k^r
  \end{array}
 \right.
$$

In order to solve this recurrence, we first simplify it using the following
lemma.

\begin{lemma}
  $T(N,0) = T'(N)$, where
  $$
  T'(n)=
  \left\{
   \begin{array}{ll}
    1 & \mbox{if } n \le t \\
    1 + \sum_{i=0}^{3} T' \left( p_i \frac{n-t}{k} \right) &
    \mbox{if } n > t
   \end{array}
  \right.
  $$
\end{lemma}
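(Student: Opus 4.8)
The plan is to show that the change of variables $r \mapsto r+1$ acting on the recurrence for $T$ corresponds exactly to the scaling $n \mapsto (n-t)/k$ in a level-free recurrence, and then to close the loop by induction. Concretely, I would first introduce the shifted family $T_r(n) := T(n,r)$ and observe from the definition that $T_0 = T(\cdot,0)$ is what we want to compute, while the recurrence couples $T_r$ to $T_{r+1}$.

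The key algebraic step is the claim that $T(n,r) = T'\!\left(\dfrac{n - t(k^r-1)/(k-1) \cdot 0 + \cdots}{\cdots}\right)$ — more precisely, I expect the clean statement to be
\[
T(n,r) = T'\!\left( \frac{n}{k^{\,r}} - t\,\frac{k^{r}-1}{k^{r}(k-1)}\cdot\big(\text{something}\big)\right),
\]
so before committing I would just verify the simplest guess: that $T(n,r) = T'(f_r(n))$ for the affine map $f_r$ determined by $f_0 = \mathrm{id}$ and the compatibility $f_r(n) = f_{r+1}\!\big(p_i(n - tk^r)\big)/p_i$ forced by matching the two recurrences term by term. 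Unwinding this, $f_r$ must satisfy $f_r(tk^r) = t$ (so the base cases $n \le tk^r$ and $f_r(n)\le t$ coincide) and $f_{r+1}(x) = k\,f_r\!\big(tk^r + x\big)$ — equivalently $f_r(n) = (f_{r-1}(n) - t)/k$ evaluated appropriately. The upshot is the single invariant one needs: the map sending the level-$r$ argument $n$ to the level-$(r{+}1)$ argument $p_i(n-tk^r)$, after dividing by $p_i$, commutes with "subtract $t$, divide by $k$". Checking this is the one genuine computation, and it is the step I expect to be the main obstacle — not because it is deep, but because one must be careful that the threshold $tk^{r}$ at level $r$ rescales to the threshold $t$ at level $0$ of $T'$ under the same map that rescales the recursive arguments, and an off-by-one in the exponent of $k$ breaks everything.

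Once the affine change of variables is pinned down, the proof is a straightforward induction on (say) $\lceil n \rceil$ or on the finite recursion depth: in the base case $n \le t$ both sides equal $1$; in the inductive case, expand $T(n,0)$ one level using its recurrence, apply the induction hypothesis to each of the four strictly-smaller subcalls $T\big(p_i(n-t),1\big)$, rewrite each as $T'$ of the rescaled argument using the commutation identity (which for $r=1$ says exactly that the argument is $p_i(n-t)/k$), and recognize the result as the right-hand side of the recurrence for $T'(n)$. Termination of the recursion — needed so the induction is well-founded — follows because $k>1$ forces $p_i(n-t)/k < n$ strictly whenever $n > t$, so the arguments decrease and eventually fall below $t$. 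This gives $T(N,0) = T'(N)$ as claimed.
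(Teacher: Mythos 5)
Your overall strategy coincides with the paper's: find a rescaling $f_r$ with $T(n,r) = T'(f_r(n))$ and induct on the level (equivalently, on recursion depth). The problem is that the proposal never actually produces $f_r$, and the candidate formulas you do write down are wrong, so the ``one genuine computation'' you correctly identify as the crux is left open. The clean answer is simply $f_r(n) = n/k^r$, with no additive correction: the base cases match because $n \le tk^r$ iff $n/k^r \le t$, and the recursive step matches because
$$T'\!\left(\frac{n}{k^r}\right) = 1 + \sum_{i=0}^{3} T'\!\left(\frac{p_i}{k}\left(\frac{n}{k^r}-t\right)\right) = 1 + \sum_{i=0}^{3} T'\!\left(\frac{p_i\,(n-tk^r)}{k^{r+1}}\right),$$
whose arguments are exactly $f_{r+1}$ applied to the subcall arguments $p_i(n-tk^r)$ of $T(\cdot\,,r+1)$. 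By contrast, your guessed compatibility $f_{r+1}(x) = k\,f_r(tk^r+x)$ fails for this $f_r$ (it gives $tk + xk^{1-r}$ rather than $x/k^{r+1}$), and so does ``$f_r(n) = (f_{r-1}(n)-t)/k$'' (off by the additive constant $t/k$); the correct relation is $f_{r+1}\bigl(p_i(n-tk^r)\bigr) = p_i\bigl(f_r(n)-t\bigr)/k$. Your tentative closed form containing a $t(k^r-1)/(k-1)$ term likewise suggests you expected an affine shift that is not there.

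The rest of the outline is sound: once the identity $T(n,r)=T'(n/k^r)$ is in hand, the induction (the paper runs it in reverse on $r$, starting from the depth $r_0$ at which both sides are trivially $1$; your forward induction on recursion depth works just as well) and the termination argument go through. So the gap is not conceptual, but it is real: as written, the central identity of the proof is neither stated correctly nor verified, and the verification is precisely the content of the lemma.
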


\begin{proof}
 We prove the following stronger statement, of which the lemma is a particular
case:

$$T(n, r) = T'\left( \frac{n}{k^r} \right)$$

The statement is proven using reverse induction. For every value of $n$, there
is a value $r_0 = \lceil \log_k \frac{n}{t} \rceil$, such that for all
$r>r_0$, the statement is trivially true: $T(n, r) = T'\left( \frac{n}{k^r}
\right) = 1$, since in both cases the base case condition is met.

Now, we prove that $T(n, r) = T'\left( \frac{n}{k^r} \right)$ implies
$T(n, r-1) = T'\left( \frac{n}{k^{r-1}} \right)$:

\begin{eqnarray*}
T'\left( \frac{n}{k^{r-1}} \right) &=& 1 + \sum_{i=0}^{3}
T' \left( \frac{p_i}{k} \left(\frac{n}{k^{r-1}} - t \right) \right) \\
&=& 1 + \sum_{i=0}^{3}
T' \left( \frac{p_i}{k^r} \left(n - tk^{r-1}\right) \right) \\
&=& 1 + \sum_{i=0}^{3}
T \left( p_i \left(n - tk^{r-1}\right), r \right) \\
&=& T(n, r-1)
\end{eqnarray*}

Thus, for every $n$, $T(n,0) = T'(n)$.
\end{proof}

The new recurrence is interpreted as follows. Instead of increasing the
threshold by a factor $k$, we maintain the threshold constant but reduce $n$ by
the same factor in the recurrent call. With this, parameter $r$ is a constant
and the new recurrence has only one parameter, which makes it easier to solve.
Specifically, we observe that it can be solved using the Akra-Bazzi method (see
Appendix~\ref{app:akra-bazzi_theorem}.) This method provides an analytical
solution with a parameter $s$. Although we are not able to compute the exact
value of $s$, we are able to use numerical methods and give a few bounds for
the worst case in the following theorems.

\begin{theorem}
\label{the:DN-tree_size}
 The asymptotic space used by a DN-tree with parameters $k$ and $t$, is
$T(N) \in \Theta \left( N^s \right)$, where $N$ is the number of extent
accesses and $s$ is the solution to $$\sum_{i=0}^{3} p_i^s = k^s$$
\end{theorem}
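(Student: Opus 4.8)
The plan is to apply the Akra--Bazzi theorem directly to the simplified recurrence $T'(n)$ established by the preceding lemma. Recall that $T'$ satisfies $T'(n) = 1 + \sum_{i=0}^{3} T'\!\left(\frac{p_i}{k}(n-t)\right)$ for $n > t$, with a constant base case. I would first cast this in the standard Akra--Bazzi form $T'(n) = g(n) + \sum_{i} a_i T'(b_i n + h_i(n))$, reading off $a_i = 1$ for each $i \in \{0,1,2,3\}$, $b_i = p_i/k$, the driving term $g(n) = 1$, and the perturbation $h_i(n) = -t\,p_i/k$ (a constant shift, which is easily seen to satisfy the Akra--Bazzi smallness condition $|h_i(n)| \in O(n/\log^2 n)$). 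One checks the hypotheses: each $b_i \in (0,1)$ provided $k > 1$ and $0 < p_i < 1$ (degenerate cases where some $p_i = 0$ or $p_i = 1$ can be handled separately or excluded), and $g(n) = 1$ is nonnegative and satisfies the required polynomial-growth regularity.

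The Akra--Bazzi theorem then gives $T'(n) \in \Theta\!\left(n^{p}\left(1 + \int_1^n \frac{g(u)}{u^{p+1}}\,du\right)\right)$, where $p$ is the unique real number solving the characteristic equation $\sum_{i=0}^{3} a_i b_i^{p} = 1$, i.e. $\sum_{i=0}^{3}\left(\frac{p_i}{k}\right)^{p} = 1$. The next step is to simplify this characteristic equation: multiplying through by $k^{p}$ yields $\sum_{i=0}^{3} p_i^{p} = k^{p}$, which matches the exponent $s$ in the theorem statement, so $p = s$. I would also argue that $s > 1$ (hence the integral is bounded): since $\sum p_i = 1$ and the $p_i$ are strictly between $0$ and $1$, at $s=1$ we get $\sum p_i = 1 < k = k^1$, and since the left side $\sum p_i^s$ is decreasing in $s$ while $k^s$ is increasing, the crossing point $s$ satisfying $\sum p_i^s = k^s$ occurs at some $s > 1$. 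With $s > 1$, the integral $\int_1^n u^{-(s+1)}\,du$ converges to a constant, so the $\Theta$ expression collapses to $\Theta(n^{s})$, giving $T'(N) \in \Theta(N^{s})$, and by the lemma $T(N,0) = T'(N) \in \Theta(N^{s})$.

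The main obstacle I anticipate is verifying that the hypotheses of Akra--Bazzi are genuinely met rather than just plausible: in particular, confirming that the constant perturbation $h_i(n) = -tp_i/k$ falls within the allowed error class, and handling boundary cases in the distribution (a child with $p_i = 0$ contributes a subtree that stays at the base case and should simply be dropped from the sum, effectively changing the number of active terms; a child with $p_i = 1$ would force $b_i = 1/k < 1$ still, so that is fine as long as $k > 1$). A secondary, more cosmetic point is justifying the uniqueness of the root $s$ of $\sum_{i=0}^{3} p_i^s = k^s$ — this follows from strict monotonicity of $f(s) = \sum p_i^s - k^s$ (strictly decreasing since each $p_i < 1 < k$) together with $f(1) > 0$ and $f(s) \to -\infty$, which also pins down $s > 1$ as noted. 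Once those routine checks are dispatched, the identification $p = s$ and the vanishing of the integral are immediate, and the theorem follows.
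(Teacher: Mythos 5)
Your overall route is the same as the paper's: apply Akra--Bazzi to the simplified recurrence $T'$ from the preceding lemma, reading off $g(n)=1$, $a_i=1$, $b_i=p_i/k$, $h_i(n)=-p_i t/k$, checking the (routine) hypotheses, and rewriting the characteristic equation $\sum_{i=0}^{3}(p_i/k)^s=1$ as $\sum_{i=0}^{3}p_i^s=k^s$. Up to that point your argument matches the paper's proof, which does exactly this and then evaluates the integral explicitly as $\int_1^x u^{-(s+1)}\,\mathrm{d}u = s^{-1}\left(1-x^{-s}\right)$.

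However, your claim that $s>1$ is false, and the monotonicity argument you give actually establishes the opposite. At $s=1$ you correctly observe $\sum_i p_i = 1 < k$, i.e.\ $f(1)=\sum_i p_i - k < 0$ (not $f(1)>0$ as you assert later); since $f(s)=\sum_i p_i^s - k^s$ is strictly decreasing with $f(0)=4-1=3>0$, the unique root lies in $(0,1)$, not above $1$. This agrees with the paper's Theorem~\ref{lem:bounds} and Corollary~\ref{cor:linear}, which bound $s$ by $\frac{\log 4}{\log 4k}\le 1$ for $k\ge 1$. The slip is not fatal to your proof, because collapsing the Akra--Bazzi expression to $\Theta(n^s)$ only requires the integral $\int_1^n u^{-(s+1)}\,\mathrm{d}u = s^{-1}\left(1-n^{-s}\right)$ to be bounded, which holds for every $s>0$; that positivity is exactly what your (and the paper's) monotonicity argument delivers. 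Replace the ``$s>1$'' step by ``$s>0$'' (or by the explicit evaluation of the integral, as the paper does) and the proof goes through; your handling of the degenerate $p_i$ cases and of the constant perturbation $h_i$ is otherwise fine.
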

\begin{proof}
We apply Akra-Bazzi theorem.
Our recurrence has $g(n)=1$, $c=3$, $a_i=1$, $b_i = \frac{p_i}{k}$ and
$h_i(n)=-\frac{p_it}{k}$. It is easy to check that all conditions are met.
In our case,
$$\int_{1}^{x} \frac{g(u)}{u^{s+1}}\,\mathrm{d}u =
  \int_{1}^{x} \frac{1}{u^{s+1}}\,\mathrm{d}u =
  s^{-1} \left( 1 - x^{-s} \right)$$
Thus, $T(N) \in \Theta \left( N^s \right)$, $\sum_{i=0}^{3} p_i^s = k^s$
\end{proof}

We note that the asymptotic behavior does not depend on the value of $t$. Out
of the two parameters needed to specify a DN-tree, $k$ and $t$, only $t$ has an
effect on the asymptotic size of the DN-tree.

\begin{theorem}
\label{lem:bounds}
  The asymptotic space used by a DN-tree with parameters $k \ge 1$ and $t$, is
$T(N) \in \Theta \left( N^s \right)$, where $N$ is the number of extent
accesses and $s$ is a real number between $0$ and $\frac{\log 4}{\log 4k}$.
\end{theorem}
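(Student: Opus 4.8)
The plan is to build directly on Theorem~\ref{the:DN-tree_size}, which already establishes $T(N)\in\Theta(N^s)$ with $s$ characterized implicitly by $\sum_{i=0}^{3}p_i^s=k^s$. So the only remaining task is to locate this exponent inside $\bigl[0,\frac{\log 4}{\log 4k}\bigr]$, using nothing more than the side conditions $p_i\ge 0$, $\sum_{i=0}^{3}p_i=1$ and $k\ge 1$.

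First I would pin down that the relevant $s$ satisfies $0\le s\le 1$. Set $f(s)=\sum_{i=0}^{3}p_i^s-k^s$. Since each $p_i\le 1$, the map $s\mapsto p_i^s$ is non-increasing on $(0,\infty)$, and since $k\ge 1$, the map $s\mapsto k^s$ is non-decreasing; hence $f$ is non-increasing. Evaluating at the endpoints, $f(0^+)=|\{i:p_i>0\}|-1\ge 0$ and $f(1)=1-k\le 0$, so the zero of $f$ (the Akra--Bazzi exponent) lies in $[0,1]$. I would also note the degenerate configurations ($k=1$, or all probability mass on a single child) to check the bound still holds, with equality attained when $k=1$ and $p_i=\frac14$ for all $i$.

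The key step is then a single application of the concavity of $x\mapsto x^s$ for $0\le s\le 1$ (equivalently the power--mean inequality): $\frac14\sum_{i=0}^{3}p_i^s\le\bigl(\frac14\sum_{i=0}^{3}p_i\bigr)^s=4^{-s}$, so that $\sum_{i=0}^{3}p_i^s\le 4^{1-s}$. Substituting into the defining equation gives $k^s\le 4^{1-s}$, hence $(4k)^s=4^sk^s\le 4^s\cdot 4^{1-s}=4$, i.e. $s\log(4k)\le\log 4$, which is precisely $s\le\frac{\log 4}{\log 4k}$. Combined with $s\ge 0$ from the previous step, and feeding this interval for $s$ back into Theorem~\ref{the:DN-tree_size}, we obtain the claim.

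The hard part will not be the inequality itself but the bookkeeping that makes the characterization of $s$ airtight: one must be sure the Akra--Bazzi exponent is the zero of $f$ inside $[0,1]$ rather than some spurious larger value, and the edge cases where some $p_i$ vanish or $k=1$ (so $f$ can be constant on a subinterval, and $N^s$ may carry a polylogarithmic correction) must be acknowledged. In those cases the phrase ``between $0$ and $\frac{\log 4}{\log 4k}$'' is to be read inclusively, and the argument above is unaffected since it uses only $0\le s\le 1$ together with the relation $\sum_{i=0}^{3}p_i^s=k^s$.
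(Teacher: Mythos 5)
Your proof is correct, but it reaches the bound by a different route than the paper. The paper fixes $s$ and maximizes $f(s)=\sum_{i=0}^{3}p_i^s$ over the simplex via Lagrange multipliers and the Hessian, finds the maximum at the uniform distribution $p_i=\tfrac14$, and then asserts that the root of $f(s)=k^s$ is therefore largest in the uniform case, where it can be computed explicitly as $\tfrac{\log 4}{\log 4k}$; the lower bound $s>0$ comes from $f(0)=4>k^0$ with $f$ decreasing and $k^s$ increasing. You instead first trap the exponent in $[0,1]$ by a sign change of $\sum p_i^s-k^s$ between $s=0$ and $s=1$ (which, as a by-product, re-derives Corollary~\ref{cor:linear} without invoking Theorem~\ref{lem:bounds}), and then apply Jensen/power-mean concavity of $x\mapsto x^s$ at the solution itself to get $\sum p_i^s\le 4^{1-s}$, hence $(4k)^s\le 4$. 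Your argument is more elementary and arguably tighter logically: the paper's passage from ``$f$ is pointwise maximized at uniform $p_i$'' to ``the crossing point $s$ is maximized at uniform $p_i$'' needs a monotone-comparison step that is left implicit, whereas your inequality is applied directly to the defining relation and sidesteps any optimization over the simplex; it also handles vanishing $p_i$ and $k=1$ gracefully, as you note. What the paper's route buys in exchange is the explicit identification of the extremal configuration (homogeneous transition probabilities), which it uses in the discussion following the theorem; in your version that information appears only through the equality case of Jensen, which holds for $p_i=\tfrac14$ and any $k\ge 1$, not only $k=1$ as your aside suggests --- a cosmetic slip, not a gap.
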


\begin{proof}
We define $f(s) = \sum_{i=0}^{3} p_i^s$. Using Lagrange multipliers and the
Hessian matrix, it is easy to show that, fixing the value of $s$ and
considering $p_i$ as variables, the value of $f(s)$ is maximized when $p_i =
\frac{1}{4}$ for all $i$. This means that with $k$ fixed, The value of $s$ is
maximized when all $p_i$ are equal. In this case, $s$ is explicitly calculated
as $s = \frac{\log 4}{\log 4k}$.

Since the values of $p_i$ are fixed and are $0 < p_i < 1$, $f(s)$ is strictly
monotonically decreasing. Also, using that $k^s$ is strictly monotonically
increasing and $f(0) = 4 > k^0 = 1$ we conclude that the solution to $f(s) =
k^s$ verifies $s > 0$.
\end{proof}

\begin{corollary}
\label{cor:linear}
  The asymptotic space used by a DN-tree with parameters $k \ge 1$ and $t$, is
$T(N) \in \Theta \left( N^s \right)$, where $N$ is the number of extent
accesses and $s$ is a real number between $0$ and $1$.
\end{corollary}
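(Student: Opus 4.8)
The plan is to derive this corollary directly from Theorem~\ref{lem:bounds}, which already establishes $T(N) \in \Theta(N^s)$ with $0 < s < \frac{\log 4}{\log 4k}$. All that remains is to verify that the upper bound $\frac{\log 4}{\log 4k}$ is itself bounded above by $1$ whenever $k \ge 1$, so that the tighter interval $(0, \frac{\log 4}{\log 4k})$ is contained in $(0,1)$.

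First I would observe that $\frac{\log 4}{\log 4k} \le 1$ is equivalent, after multiplying through by the positive quantity $\log 4k$ (positive because $4k \ge 4 > 1$), to $\log 4 \le \log 4k$. Since the logarithm is strictly monotonically increasing, this in turn is equivalent to $4 \le 4k$, i.e. $1 \le k$, which is exactly the hypothesis. Hence $\frac{\log 4}{\log 4k} \le 1$, with equality precisely when $k = 1$.

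Combining this with Theorem~\ref{lem:bounds}, we get $0 < s < \frac{\log 4}{\log 4k} \le 1$, so $s$ lies strictly between $0$ and $1$, and $T(N) \in \Theta(N^s)$ as claimed. There is essentially no obstacle here: the only subtlety is making sure the inequality manipulation is valid, which just requires noting that $\log 4k > 0$ for $k \ge 1$ so that multiplying both sides preserves the direction of the inequality. The corollary is really just a convenient weakening of Theorem~\ref{lem:bounds} that makes the sub-linear growth of the DN-tree immediately apparent.
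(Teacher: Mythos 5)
Your proposal is correct and follows essentially the same route as the paper, which likewise deduces the corollary from Theorem~\ref{lem:bounds} via the single observation that $k \ge 1$ implies $\frac{\log 4}{\log 4k} \le 1$. Your version merely spells out the monotonicity argument behind that inequality, which the paper leaves implicit.
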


\begin{proof}
$k \ge 1$ implies $\frac{\log 4}{\log 4k} \le 1$
\end{proof}

The size of the DN-tree is, in the worst case, sublinear in the number of
extent accesses with a reasonable parametrization ($k>1$), as shown in
Corollary~\ref{cor:linear}. Moreover, Theorem~\ref{lem:bounds} provides a
tighter bound by a deep analysis of the worst case for $s$, which happens when
the probability of all transitions recorded in the DN-tree is homogeneous. 

Table~\ref{tab:matrixsize} shows bounds found numerically for different values
of $k$ and $p_{\mathrm{max}} = \max_{i \in \left[ 1, 4\right]}{p_i}$ using
Theorem~\ref{the:DN-tree_size}. When $p_{\mathrm{max}} = 0.25$, the bound is
that given by Lemma~\ref{lem:bounds}. Other cases are, for example,
$p_{\mathrm{max}} = 0.9$ and $k = 2$, where the DN-tree has size $O
\left(N^{0.52} \right)$, while with $k = 8$ the size is $O \left( N^{0.32}
\right)$. This corresponds approximately to the square and third roots of $N$,
respectively. With one MiB of memory, these configurations store about
$10^{12}$ and $10^{18}$ extent accesses, respectively.

\begin{table}[t]
 \centering
  \begin{small}
  \begin{tabular}{ | l | c | c | c | c | }
\hline
k & 1.5 & 2 & 4 & 8 \rule{0pt}{2.7ex} \\
\hline
size &
$O \left( N^{0.77}        \right)$ \rule{0pt}{3ex} &
$O \left( N^{\frac{2}{3}} \right)$ \rule{0pt}{3ex} &
$O \left( N^{\frac{1}{2}} \right)$ \rule{0pt}{3ex} &
$O \left( N^{\frac{2}{5}} \right)$ \rule{0pt}{3ex} \\
\hline
  \end{tabular}
  \end{small}
  \caption{ Bounds for size of the DN-tree. }
\label{tab:matrixsize}
\end{table}

\begin{table}[t]
 \centering
  \begin{small}
  \begin{tabular}{ | l | l | l | l | l | }
\hline
$p_{\mathrm{max}}/k$ &  1.5 &    2 &   4  &    8 \\
\hline
                0.25 & 0.77 & 0.67 & 0.50 & 0.40 \\
                0.3  & 0.77 & 0.67 & 0.50 & 0.40 \\
                0.4  & 0.77 & 0.66 & 0.50 & 0.40 \\
                0.5  & 0.76 & 0.65 & 0.49 & 0.39 \\
                0.6  & 0.74 & 0.64 & 0.48 & 0.38 \\
                0.7  & 0.72 & 0.61 & 0.46 & 0.37 \\
                0.8  & 0.69 & 0.58 & 0.44 & 0.35 \\
                0.9  & 0.63 & 0.53 & 0.40 & 0.32 \\
\hline
  \end{tabular}
  \end{small}
  \caption{ Exponent for the size of the DN-tree. }
\label{tab:matrixsize2}
\end{table}

The configuration parameter $k$ allows fine control over the growth of the
DN-tree. Table~\ref{tab:matrixsize2} summarizes values of $s$ for different
values of $k$ and distributions of data. However, the growth of the DN-tree can
be controlled further. One of the assumptions in Section~\ref{sec:definition}
is that queries follow some common pattern and do not change radically over
time. This means that after some time, we can simply stop updating the DN-tree,
as the behavior of the queries has already been captured. This is especially
useful in environments with heavy loads and lots of data accesses, because both
the CPU and memory usage are reduced.

\subsection{Error Analysis}
\label{sub:error}

\begin{figure*}[t]
\centering
\subfigure[\{0.30, 0.25, 0.25, 0.20\}]{
  \epsfig{file=./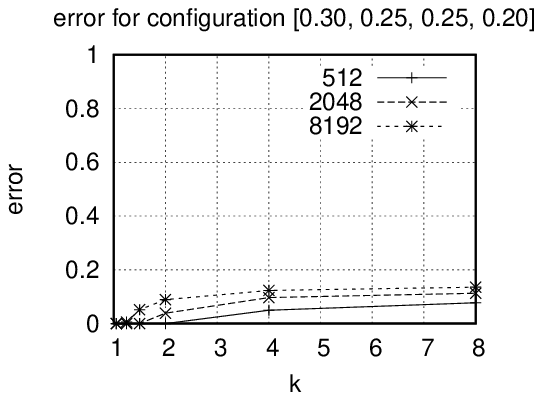, width=0.3\linewidth}
  \label{fig:error_a}
}
\subfigure[\{0.45, 0.25, 0.25, 0.05\}]{
  \epsfig{file=./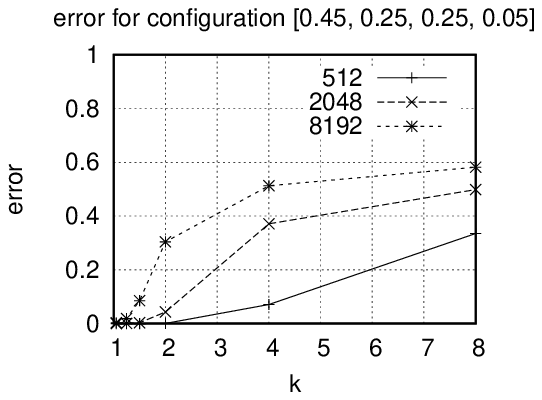, width=0.3\linewidth}
  \label{fig:error_b}
}
\subfigure[\{0.9, 0.09, 0.009, 0.001\}]{
  \epsfig{file=./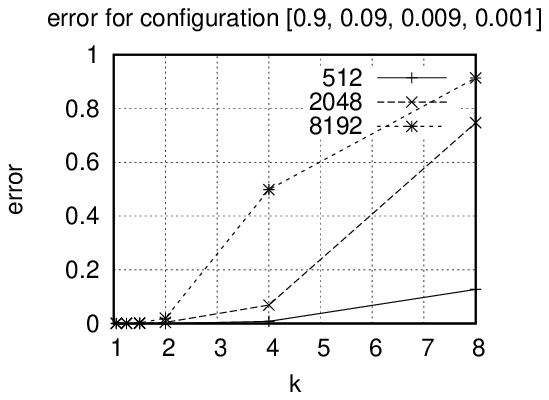, width=0.3\linewidth}
  \label{fig:error_c}
}
\caption{ Error for different configurations with $N=10^9$. }
\label{fig:error}
\end{figure*}

The compression used to store the matrix is lossy, meaning that we are not able
to recover exactly the original matrix but an approximation. In this section,
we give bounds for the error due to the compression.

We take the same assumptions as in Section~\ref{sub:size}: the matrix contents
follow an R-MAT pattern defined by the four probabilities $p_0$, $p_1$, $p_2$
and~$p_3$. Given a matrix $M$ and an approximation $\hat{M}$, we calculate the
error as the sum of the absolute value of all the elements of $M - \hat{M}$.

We have considered different configurations and generated matrices
corresponding to different values of the number of extent accesses, $N$. We
compare the matrix generated by our data structure, $\hat{M} = \left(
\hat{m}_{ij} \right)$, with the original matrix, $M = \left( m_{ij} \right)$,
and the error is computed as:

$$\mbox{error} =
\frac{1}{2 \left( N - 1 \right)}
\sum_{i,j} \left| m_{ij} - \hat{m}_{ij} \right|$$

Since the sum of the elements of both matrices is identical and equal to $N -
1$, the factor $2 \left( N - 1 \right)$ ensures that the error value is between
zero and one.

Figure~\ref{fig:error} shows graphs for three different configurations of the
values of $p_i$. The vertical axis corresponds to the error while the
horizontal axis represents the threshold growing constant $k$. Each graph has
three plots, corresponding to three different sizes of matrix $M$, $512 \times
512$, $2048 \times 2048$ and $8192 \times 8192$, representing datasets with
$512$, $2048$ and $8192$~extents.

The first graph, in Figure~\ref{fig:error_a}, corresponds to values of $p_i$
equal to \{$0.30$, $0.25$, $0.25$, $0.20$\}, which is very close to the uniform
distribution $p_i = 0.25$. With this distribution, the error is small for all
values of $k$. The graph in Figure~\ref{fig:error_c} corresponds to values of
$p_i$ equal to \{$0.9$, $0.09$, $0.009$, $0.001$\}, which is a very different
distribution and the error is higher. Figure~\ref{fig:error_b} corresponds to a
distribution between the other two, and the error is significantly higher than
the one in Figure~\ref{fig:error_a}, but does not grow as high as the one in
Figure~\ref{fig:error_c}.

We observe that for all configurations, a small value of $k$ is associated with
a small error. This is in contrast to the size analysis, where small values of
$k$ are associated with larger sizes of the data structure. For our
experiments, we choose a value of $k = 1.5$, which provides a very small error,
and experiments show that the data structure uses less than $0.2\%$ of the
available memory of the system and it represents less than $0.002\%$ of the
size of the full matrix $M$.

\section{Experimental Evaluation}
\label{sec:experiments}

We ran experiments with the DYDAP system using two different datasets. The
first dataset is a synthetic R-MAT graph~\cite{ChakrabartiZF04}, while the
second one is a graph built with information from Twitter. The first dataset is
used with a query that accesses all vertices and edges of the graph, and is
used to measure the throughput of the system. The Twitter dataset is used with
queries that generally access a small fraction of the database, and is useful
to evaluate the average response time of the system, as it is necessary that
these types of queries execute in a very short time.

Our proposal is compared to a state-of-the-art static partitioning method, used
in ParallelGDB~\cite{barguno2011parallelgdb}, which we use as baseline. This
method statically defines a fixed distribution function that does not adapt to
incoming queries.

\subsection{Prototype Implementation}
\label{sub:prototype}

In the experiments, we use an implementation of the complete system as
described in this paper, using a modified version of
DEX~4.2~\cite{martinez2012efficient} as the graph runtime. DEX is used to
execute the queries at each node. The system supports an arbitrary number of
nodes, and uses MPI for data communication between them. The distributed file
system is the same used by ParallelGDB, described
in~\cite{barguno2011parallelgdb}.

The execution of a query follows the model described in
Section~\ref{sec:definition}. Once the query is finished, there is a final
round of network communication where all nodes send information to the node
that started the query, which outputs the result.

%
%
%

\subsection{Synthetic Data}

The synthetic dataset is used to measure throughput, as well as the standard
deviation of the load and network communication of the nodes. The query
executed accesses the whole graph, and is used to simulate analytic scenarios
where queries explore a large portion of the database.

The throughput is measured in traversed edges per second (TEPS.) This unit
describes the amount of graph edges that the system processes each second while
executing a query.

We also report the standard deviation of the amount of load and network
communication. Since the standard deviation has the same units as the data,
its units are the number of traversed vertices and bytes, respectively.

\subsubsection*{Setup}

The graph used is the biggest connected component of an R-MAT
graph~\cite{ChakrabartiZF04} with $2^{26}$~vertices and mean degree $16$. The
resulting graph has more than $37$ million vertices and one billion edges
($37,165,451$ and $1,051,580,953$, respectively).

The query executed on the R-MAT graph is a BFS starting from a randomly
selected fixed vertex. The query accesses the whole graph, and exhibits poor
locality, as data is not accessed repeatedly. It is executed using $10$~BSP
phases. Each phase calculates the next hop using as a source the new vertices
explored in the previous phase, and the load and network communication is
recorded at each node.

The BFS query is executed in a cluster with identical nodes, with 2x AMD
Opteron 6164 HE CPUs running at $1.7$~GHz and with $48$~GiB of RAM. The
nodes run a Linux base OS with kernel version $2.6.32$. The graph engine is
configured to use a maximum of $16$~GiB of RAM and to run using only one core
of the $12$~available.

Additionally, the memory of the nodes is cleared before the first execution of
each configuration. The BFS query is executed twice, without clearing the
memory between the two executions. The results are reported as
$\mathrm{static}_1$, $\mathrm{static}_2$, $\mathrm{DYDAP}_1$ and
$\mathrm{DYDAP}_2$ where the subindex indicates whether it corresponds to the
first or second execution.

With this setup, the DN-tree uses a maximum of $27.2$~MiB on each node, which
is $0.16\%$ of available memory.

\subsection*{Throughput}

\begin{figure}[t]
\centering
\epsfig{file=./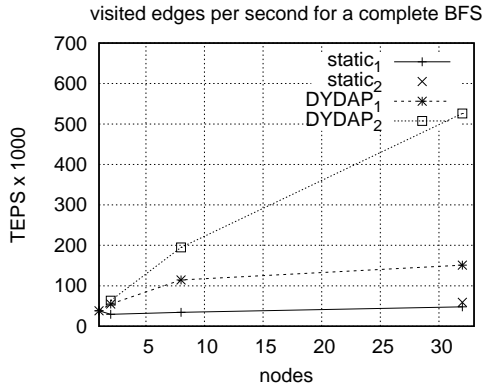, width=0.8\linewidth}
\caption{ Throughput of a BFS. }
\label{fig:throughput}
\end{figure}

\begin{table}[t]
 \centering
  \begin{small}
  \begin{tabular}{ | l | r r r r | }
    \hline
system / \#nodes &      1 &      2 &       8 &      32 \\
\hline
(1) static$_1$   & 38,180 & 29,425 &  34,427 &  48,031 \\
(2) static$_2$   &      - &      - &       - &  59,061 \\
(3) DYDAP$_1$    & 38,180 & 54,881 & 114,389 & 151,198 \\
(4) DYDAP$_2$    &      - & 63,310 & 194,882 & 525,790 \\
\hline
Speedup (3)/(1)  &      1 &    1.9 &     3.3 &     3.1 \\
Speedup (4)/(3)  &      - &    1.2 &     1.7 &     3.5 \\
Speedup (4)/(2)  &      - &      - &       - &     8.9 \\
Speedup (4)/(1)  &      - &    2.3 &     5.6 &    10.9 \\
    \hline
  \end{tabular}
  \end{small}
 \caption{ Throughput in TEPS and speedups. }
 \label{tab:time}
\end{table}

Table~\ref{tab:time} and Figure~\ref{fig:throughput} show the throughput in
TEPS of the BFS query both for the static method and our proposal in clusters
with $2$, $8$ and $32$~nodes. The graph also shows a second execution of the
query in DYDAP for all configurations and a second execution of the static
method with $32$~nodes. The differences in the first execution are due to the
load balancing, and the improvements of the second executions are due to the
use of the cache of the computers.

With one node, the result is independent of the system, as no distribution
takes place, and corresponds to executing the regular graph manager; in this
case the regular version of DEX, modified to communicate with the partition
manager.

When executed using two nodes, the baseline explores $29,425$~edges per second.
The execution of the query using the static method is used to collect data to
partition the database according to this query load and generate the first
distribution function for our method. When the query is executed using DYDAP,
it achieves $54,881$~TEPS, or $87\%$~better throughput.

Using eight nodes, the baseline traverses $34,427$~edges per second, while
DYDAP explores $114,389$. In this case, the throughput of DYDAP is
$3.3$~times that of the baseline, and more than twice the throughput of DYDAP
with two nodes. A second execution achieves a throughput of $194,882$~TEPS,
which is $70\%$ more than the throughput of the first execution.

With $32$~nodes, the query is executed two times on each system. The static
method explores $48,031$ edges per second during the first execution and
$59,061$ during the second one. This improvement of $23\%$ in throughput is due
to the cache specialization. Our system performs its first execution with a
throughput of $151,198$~TEPS, which is $3.15$~times the throughput of the
first execution with the static method. The second execution achieves
$525,790$~TEPS, which is almost an order of magnitude faster than the baseline.

\subsection*{Load}

The results of the following two subsections show how our system balances the
load and network communication between nodes, allowing the improvement seen in
the first execution of the query. The further improvement on the second
execution over the first one shows that our system is better at doing the cache
specialization of the nodes.

\begin{figure*}
\centering
\subfigure[$2$ nodes.]{
  \epsfig{file=./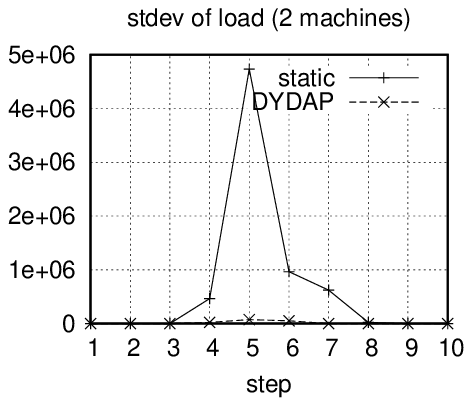, width=0.3\linewidth}
  \label{fig:load-2b}
}
\subfigure[$8$ nodes.]{
  \epsfig{file=./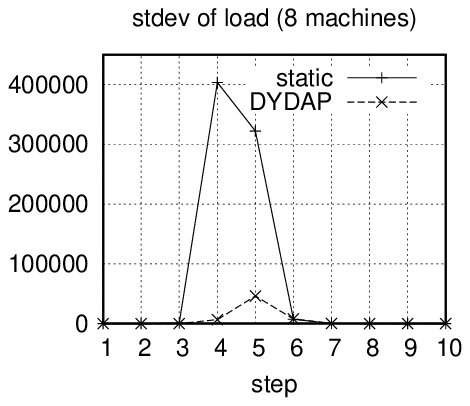, width=0.3\linewidth}
  \label{fig:load-8b}
}
\subfigure[$32$ nodes.]{
  \epsfig{file=./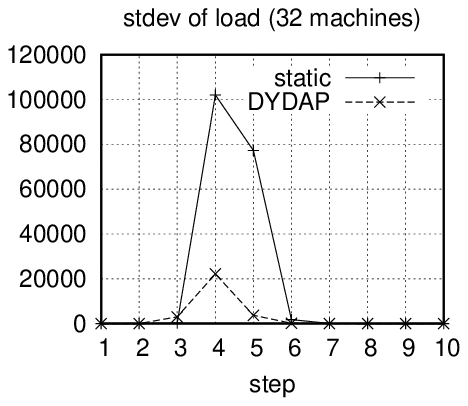, width=0.3\linewidth}
  \label{fig:load-32b}
}
\subfigure[$2$ nodes.]{
  \epsfig{file=./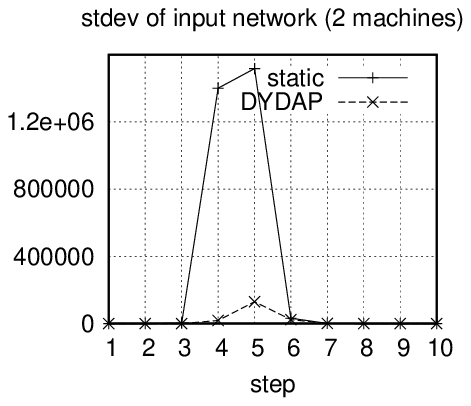, width=0.3\linewidth}
  \label{fig:net-2b}
}
\subfigure[$8$ nodes.]{
  \epsfig{file=./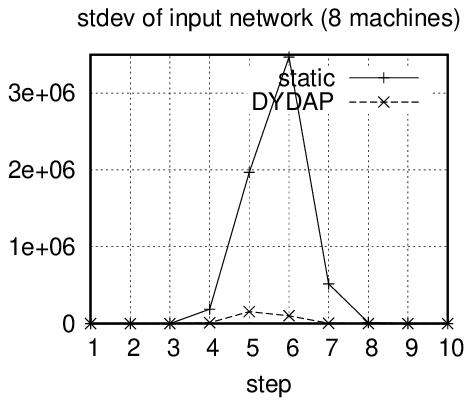, width=0.3\linewidth}
  \label{fig:net-8b}
}
\subfigure[$32$ nodes.]{
  \epsfig{file=./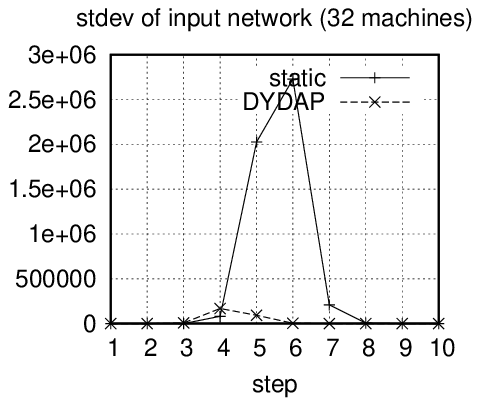, width=0.3\linewidth}
  \label{fig:net-32b}
}
\caption{ Standard deviation of load (top) and network communication (bottom)
at each phase. }
\label{fig:load_network}
\end{figure*}

For the configuration with two nodes, Figure~\ref{fig:load-2b} shows the
standard deviation of the load of both nodes at each phase, measured as the
number of vertices traversed, using the static method and our proposal. Notice
how at each step, the load at each node is almost identical using our proposal,
while there are differences using the static method. This differences imply
that one node is idly waiting for the other to finish, in order to start the
network communication. This accounts for the time saved when using our system.

Figures~\ref{fig:load-8b} and~\ref{fig:load-32b} show the standard deviation of
the load when using $8$ and $32$~nodes for both methods. Here we observe how
the standard deviation is much lower for our method, which means that all nodes
have a similar load, drastically reducing the time needed to finish the query.

\subsection*{Network Communication}

Using two nodes, Figure~\ref{fig:net-2b} shows the standard deviation of
incoming network communication between each pair of consecutive phases for one
of the two nodes. Since there are only two nodes in the system, it is
equivalent to the standard deviation of outgoing network communication. We
observe that the distribution of network communication is uneven with the
static method, but almost equal in DYDAP. This is a side effect of having a
balanced load.

Figures~\ref{fig:net-8b} and~\ref{fig:net-32b} show the standard deviation of
incoming network communication for the systems with $8$ and $32$ nodes. In
these cases it is also clear that network communication is much more balanced
in our proposal than in the baseline, which means that it completes much faster.

\subsection{Real Data}

The query on the Twitter dataset is used to compute the average response time
of the system. The query executed on this dataset explores a small portion of
the graph, and simulates scenarios such as web servers, where there are a large
number of queries and each query does not access a large fraction of the
database. The average response time of the system is measured in seconds, and
is calculated as the duration of a time frame divided by the number of queries
that the system executes during that time frame.

\subsubsection*{Setup}

The Twitter dataset is generated by combining the data obtained
from~\cite{twitterdataset1} and~\cite{twitterdataset2}, it contains over
$40$~million users, $26$~million tweets and one billion following/follower
relationships. This graph has four different types of vertices: \emph{user},
\emph{tweet}, \emph{hashtag} and \emph{url}; and seven types of edges:
\emph{tweets}, \emph{follows}, \emph{receives}, \emph{depicts}, \emph{retweet},
\emph{tags} and \emph{reference}. Each edge type joins two specific vertex
types, for example, a \emph{tweets} edge joins a \emph{user} vertex with a
\emph{tweet} vertex. The database with all the information has a size of
248~GiB.

The query executed on this dataset is a 2-hop. It simulates the generation of
the home page of a user. Given $u$, a vertex of type \emph{User}, the first hop
explores the edges of type \emph{follows} that start at $u$. After this step,
the query has a set of vertices of type \emph{User} that $u$ follows. The
second hop expands this set through edges of type \emph{tweets}. This second
hop is done through a different edge type; this means that a different data
structure is accessed.
The final results are the tweets of the users that $u$ follows. This is an
example of a system issuing queries interactively to generate web pages in real
time. Thus, it is very important to have low response times.

The 2-hop query has been executed on Amazon~EC2. The instances used are of type
m2.xlarge: $17.1$~GiB of memory and $6.5$~EC2 Compute Units ($2$~virtual cores
with $3.25$~EC2 Compute Units each). The database is stored using an EBS volume
for each instance. Also, the graph engine is configured to use a maximum of
$12$~GiB of memory.

With this setup, the DN-tree uses a maximum of $7.5$~MiB on each node, which is
$0.06\%$ of available memory.

\subsection*{Average response time}

On each system, the query is executed multiple times using a random vertex of
type \emph{User} as a parameter. Each node has $10$~threads executing queries
concurrently during $90$~minutes. The results reported are the average response
time of the system for the last $30$~minutes, which is calculated as the time
elapsed divided by the number of queries executed. The whole experiment is
executed three times and the best results for each configuration are reported.

\begin{figure}[t]
\centering
\epsfig{file=./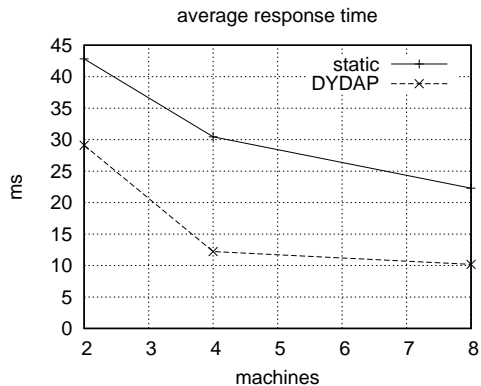, width=0.8\linewidth}
\caption{ Average response time for 2-hops during the last $30$~minutes of
execution. }
\label{fig:avg}
\end{figure}

Figure~\ref{fig:avg} shows the average response time for both methods for
configurations with $2$, $4$ and $8$~nodes. With $2$~nodes, the average
response time of our system is $67.9\%$ of that of the baseline. For $4$ and
$8$~nodes, our average response time is reduced to $40.1\%$ and $45.6\%$ with
respect to the baseline.

We observe that, with four nodes, the average response time is divided by two.
Using eight nodes does not improve the response time by a large factor because,
on average, the queries do not require a lot of computation, which means that
the network costs are a large fraction of the total cost.

\subsection{Summary}

The experiments performed using synthetic data show that our system scales with
the number of nodes, and our dynamic data partitioning outperforms a state of
the art static method by a factor of three during the first execution. During
the second execution, when the system is warmed up, the baseline barely
improves, while DYDAP achieves a throughput an order of magnitude larger than
that of the baseline. This is because our system is better at performing node
cache specialization. Also, analyzing the load and network usage of each node,
we see that it is much more balanced in DYDAP, which allows a much better
performance.

The experiments with real data show that DYDAP has an average response time
that is less than half of the baseline when using at least four nodes. We
see that with four nodes, DYDAP already achieves a very good result, with
an average response time of $12$~ms. With eight nodes, the average response
time is similar, $10$~ms. This is due to the fact that this query is relatively
simple, so the time spent in network communication during the query dominates
with respect to the time spent accessing the data to answer the query when the
number of nodes is large. Our system uses all the available nodes to solve each
query. We leave as future work to analyze the compromise between the resources
dedicated to solve each query and the response time of the system.


\section{Related Work}
\label{sec:related}

There are descriptions of several distributed systems in the literature. The
majority of them work with relational or key-value models, but some of them
employ a graph model. In this section, we review some of the distributed systems
that work with graphs and also discuss generic approaches that are applied both
to graph and non-graph models.

MapReduce has become in the latest years the main paradigm for processing large
batches of embarrassingly parallel tasks. Graph mining tasks such as calculating
the diameter of the graph or the connected components can be computed using
MapReduce. For example, Pegasus~\cite{kang2011pegasus} is a library that
represents the graphs as a sparse matrices, which are loaded in a MapReduce
platform. These matrices can be manipulated with sequences of algebraic
operations that simulate graph mining operations such as calculating the
diameter of a graph or the connected components. However, these systems do not
take into account locality decisions and are not suitable for obtaining fast
results from queries because of the overheads associated to start MapReduce
tasks. In~\cite{cohen2009graph}, the authors present an analysis of MapReduce
techniques to perform graph operations and conclude that MapReduce has severe
limitations. The communication between processes is a bottleneck, as the
programs need to exchange a large amount of data. Although some limitations are
inherent to the nature of graph data, the solution presented in our paper could
be implemented in MapReduce platforms to improve its efficiency.

Google proposed a framework for efficient graph processing called
Pregel~\cite{malewicz2010pregel} as an alternative to MapReduce. Pregel defines
a message communication network among the vertices that uses the edges as
connections. Graph algorithms are programmed as sequences of message exchanges
in this network. Other papers by different authors improve Pregel. For example,
Sedge~\cite{yang2012towards} adds the management of different partitions over
the original graph, and in~\cite{shangcatch} the authors propose a method that
changes the partitioning by exploring the behavior of the system during a
working window. PowerGraph~\cite{gonzalez2012powergraph} is a distributed
system using a model very similar to that of Pregel, but partitions vertices
instead of edges. Although Pregel alleviates the network bottlenecks of
MapReduce, this solution is also oriented to large scale graph mining and not
queries with short response time. Besides, it requires rewriting all graph
algorithms in terms of such network communication model, which is not close to
the traditional graph abstract data type. In contrast, we use a traditional API
programming model that allows the use of classic implementations of graph
algorithms.

One method used to improve system throughput is cache specialization, which
ensures data locality. In~\cite{elnikety2007tashkent}, the authors present a
load balancing method for relational databases that assigns transactions to
replicas in a way that they are executed in memory, thus reducing disk usage.
This method is an example of system using data locality. Our system is
different because it does not use replicas, but rather distributes the only
copy of the data between nodes, also exploiting data locality to reduce disk
usage.

In~\cite{pujol2012little}, the authors propose a method for partitioning and
replication of social networks that minimizes the number of replicas necessary
to guarantee data locality for all vertices. A generalization of this method is
presented in~\cite{mondal2012managing}. These two methods focus on controlling
the number of replicas to minimize network communication. Our method does not
replicate data and additionally balances the load of the computers.
Also, these methods require an expensive process of recomputing partitionings
when the graph is modified, while our partitioning recomputations are very
fast.

ParallelGDB~\cite{barguno2011parallelgdb} is a static method to distribute a
graph database that works by specializing the caches of the nodes in the
system, reducing disk and network I/O. Our method works similarly, but the
partitioning is adapted to the incoming queries, and provides high performance
also in cases where the workload is skewed.

Schism~\cite{curino2010schism,tatarowicz2012lookup} is an approach to
relational database partitioning and replication that uses a method similar to
ours to partition data. The method constructs a graph that assigns one vertex
to each tuple of the database. The resulting graph is often very large and it
has to be sampled to improve efficiency, while our system does not use
sampling. A limitation of this method is that it needs to know the query
workload before execution starts, and what data each query accesses, instead of
adapting to incoming queries.

\section{Conclusions}
\label{sec:conclusions}

In this paper, we proposed a distributed system design in two levels: secondary
storage and memory manager. Independently of the storage implementation used,
the memory manager specializes the cache of the computing nodes. The
memory manager uses a new method to dynamically partition data in a
graph database. This data partitioning balances the load and minimizes the
amount of network communication in the distributed system. The goal is to adapt
the data partitioning to the incoming query workload, and this goal is achieved
as shown by the experiments.

Our experimental results show that our distributed system works in a variety of
different situations. It provides great performance when used to analyze huge
amounts of data, and also in a web environment, where queries are simpler but
require a very short response time.


\section*{Acknowledgments}

The members of DAMA-UPC thank the Ministry of Science and Innovation of Spain
and Generalitat de Catalunya, for grant numbers TIN2009-14560-C03-03 and
SGR-1187 respectively.
Experiments presented in this paper were carried out using the Grid'5000
experimental testbed, being developed under the INRIA ALADDIN development
action with support from CNRS, RENATER and several Universities as well as
other funding bodies (see https://www.grid5000.fr).

\begin{small}

\bibliographystyle{IEEEtran}
\bibliography{ondemand}

\end{small}

\appendix

\section{Akra-Bazzi theorem}
\label{app:akra-bazzi_theorem}
\begin{theorem}[Akra-Bazzi~\cite{akra1998solution}]
 Given a recurrence of the form:
$$T(n) = g(n) + \sum_{i=0}^c a_i T(b_i n + h_i(n))$$
with the following conditions:
 $a_i$ and $b_i$ are constants;
 $a_i > 0$;
 $0 < b_i < 1$;
 $|g(n)| \in O\left(n^d\right)$, where $d$ is a constant;
 and, $|h(n)| \in O\left(\sfrac{n}{\left( \log n \right)^2}\right)$.

The asymptotic behavior of $T(n)$ is given by:
$$T(n)\in\Theta\left(n^s\left(1+\int_1^n\frac{g(u)}{u^{s+1}}du\right)\right)$$
where $s$ is the solution to the equation:
$$\sum_{i=0}^{c} a_i b_i^s = 1$$
\end{theorem}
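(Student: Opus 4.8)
The plan is to run the classical ``guess the closed form, then verify it by two-sided induction'' argument for Akra--Bazzi-type recurrences. First I would check that the exponent $s$ is well defined: the map $\sigma \mapsto \sum_{i=0}^{c} a_i b_i^{\sigma}$ is continuous and, since every $a_i > 0$ and $0 < b_i < 1$, strictly decreasing, with limit $+\infty$ as $\sigma \to -\infty$ and limit $0$ as $\sigma \to +\infty$; hence $\sum_{i=0}^{c} a_i b_i^{\sigma} = 1$ has a unique real root, which we call $s$. Set $\Phi(n) = n^{s}\left(1 + \int_{1}^{n} g(u)\,u^{-s-1}\,\mathrm{d}u\right)$ for the candidate; the target is to produce constants $0 < C_1 \le C_2$ and a threshold $n_0$ such that $C_1 \Phi(n) \le T(n) \le C_2 \Phi(n)$ for all $n \ge n_0$, which is exactly $T(n) \in \Theta(\Phi(n))$.

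The heart of the matter is a \emph{self-consistency estimate} for $\Phi$. Substituting the unperturbed arguments $b_i n$ and splitting $\int_{1}^{b_i n} = \int_{1}^{n} - \int_{b_i n}^{n}$, the defining identity $\sum_i a_i b_i^{s} = 1$ makes $\sum_{i} a_i \Phi(b_i n)$ collapse to $\Phi(n) - n^{s}\sum_{i} a_i b_i^{s}\int_{b_i n}^{n} g(u)\,u^{-s-1}\,\mathrm{d}u$, so that $\sum_i a_i \Phi(b_i n) + g(n)$ equals $\Phi(n)$ plus an error term built from $g(n)$ and the short-range integrals $\int_{b_i n}^{n} g(u)\,u^{-s-1}\,\mathrm{d}u$. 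In the case relevant to this paper, $g \equiv 1$ (as noted in the proof of Theorem~\ref{the:DN-tree_size}), these integrals evaluate in closed form and the error term is a constant, hence $o(\Phi(n))$ since $s>0$ there; more generally the polynomial size bound $|g(n)| \in O(n^{d})$ keeps the error controlled relative to $\Phi$. Feeding $C_1 \Phi \le T \le C_2 \Phi$ for smaller arguments into the recurrence and invoking this estimate, one checks that $C_2$ can be taken large enough and $C_1$ small enough for the inductive step to close, after discharging the finitely many base cases $n < n_0$ by further adjusting the constants.

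The step I expect to be the main obstacle is handling the perturbations $h_i(n)$: the recursion really recurses on $b_i n + h_i(n) = b_i n\left(1 + \frac{h_i(n)}{b_i n}\right)$, a multiplicative distortion of $1 + O\!\left(1/(\log n)^2\right)$ by the hypothesis $|h_i(n)| \in O\!\left(n/(\log n)^2\right)$. These distortions compound along every root-to-leaf path of the recursion tree, so I must show the product $\prod_{j}\left(1 + O\!\left(1/(\log n_j)^2\right)\right)$ over such a path stays bounded; this holds because the argument sizes $n_j$ shrink geometrically, so $\log n_j$ decreases essentially linearly in $j$ and the non-negligible terms of $\sum_j 1/(\log n_j)^2$ form a $\sum 1/m^2$-type series and thus sum to a constant. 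Equivalently, and more cleanly, one shows that $\Phi$ is slowly varying enough that $\Phi(b_i n + h_i(n)) = \Phi(b_i n)\left(1 + O\!\left(1/(\log n)^2\right)\right)$ and absorbs this extra factor into the same induction by inflating $C_2/C_1$. The residual bookkeeping --- the cases where $b_i n + h_i(n)$ lands in the base region, and, when $g$ is not eventually monotone, estimating $\int_{b_i n}^{n} g(u)\,u^{-s-1}\,\mathrm{d}u$ in absolute value --- is routine but must be carried out with care.
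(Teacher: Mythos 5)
You should first note that the paper contains no proof of this statement at all: it is the classical Akra--Bazzi theorem, quoted verbatim in the appendix from~\cite{akra1998solution} and used only as a black box in the proof of Theorem~\ref{the:DN-tree_size}, so there is no in-paper argument to compare yours against and your sketch must stand on its own. Its skeleton is the standard one and its sound parts are genuinely sound: uniqueness of $s$ follows from strict monotonicity of $\sigma \mapsto \sum_{i} a_i b_i^{\sigma}$ with limits $+\infty$ and $0$; the self-consistency identity for $\Phi(n)=n^s\left(1+\int_1^n g(u)\,u^{-s-1}\,\mathrm{d}u\right)$ does collapse exactly as you say because $\sum_i a_i b_i^s=1$; and you correctly identify that the hypothesis $|h_i(n)|\in O\!\left(n/(\log n)^2\right)$ is what makes the per-level distortions summable.

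The genuine gap is the claim that the $1+O\!\left(1/(\log n)^2\right)$ distortion factors can be ``absorbed into the same induction by inflating $C_2/C_1$.'' With fixed constants the inductive step does not close: assuming $T(m)\le C_2\Phi(m)$ for $m<n$ and using $\Phi(b_in+h_i(n))\le \Phi(b_in)\left(1+O(1/(\log n)^2)\right)$ leaves an uncancelled surplus of order $C_2\,\Phi(n)/(\log n)^2=\Theta\!\left(n^s/(\log n)^2\right)$ at the current level --- already for $g\equiv 1$ and $s>0$ --- and nothing in that step pays for it, no matter how large $C_2$ was fixed in advance. The two standard repairs are: (i) strengthen the induction hypothesis with a decaying slack factor, e.g.\ prove $T(x)\le C_2\Phi(x)\left(1-c/\log x\right)$ and $T(x)\ge C_1\Phi(x)\left(1+c/\log x\right)$ for large $x$, so that the difference between the parent's and the children's slack, itself $\Theta\!\left(1/(\log x)^2\right)$, absorbs the distortion once $c$ is chosen large enough relative to the constant in the $h_i$ bound (this is Leighton's proof); or (ii) abandon step-by-step induction and fully unroll the recursion tree, making your path-product observation global: bound $\prod_j\left(1+O(1/(\log n_j)^2)\right)$ uniformly over root-to-leaf paths and then carry out the nontrivial accounting of internal-node $g$-contributions against the integral and of leaf counts against $n^s$. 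Your sketch gestures at (ii) but then retreats to the fixed-constant induction, which is precisely where it breaks. A smaller issue: calling the estimate of $\int_{b_in}^{n} g(u)\,u^{-s-1}\,\mathrm{d}u$ for non-monotone $g$ ``routine'' hides that, with only $|g(n)|\in O(n^d)$ as hypothesized here, the additive $g(n)$ term need not be dominated by that short-range integral; the standard proofs invoke a local polynomial-growth condition on $g$ at exactly this point, which is harmless for the paper's application $g\equiv 1$ but should be stated rather than assumed.
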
 

\end{document}